\def\NAT@def@citea{\def\@citea{\NAT@separator}}
\theoremstyle{plain}
\newtheorem{theorem}{Theorem}[section]
\newtheorem{corollary}[theorem]{Corollary}
\newtheorem{proposition}[theorem]{Proposition}
\theoremstyle{definition}
\newtheorem{definition}[theorem]{Definition}
\newtheorem{example}[theorem]{Example}
\theoremstyle{remark}
\newtheorem{remark}{Remark}
\newcommand{\defref}[1]{Definition~\ref{#1}}
\newcommand{\exref}[1]{Example~\ref{#1}}
\newcommand{\secref}[1]{Section~\ref{#1}}
\newcommand{\figref}[1]{Fig.~\ref{#1}}
\newcommand{\R}{\mathbb{R}}
\newcommand{\N}{\mathbb{N}}
\newcommand{\Z}{\mathbb{Z}}
\newcommand{\G}{\mathbb{G}}
\newcommand{\X}{\mathcal{X}}
\newcommand{\K}{\mathcal{K}}
\newcommand{\M}{\mathcal{M}}
\newcommand{\C}{\mathcal{C}}
\newcommand{\KeyGen}{\mathsf{KeyGen}}
\newcommand{\Enc}{\mathsf{Enc}}
\newcommand{\Dec}{\mathsf{Dec}}
\newcommand{\Eval}{\mathsf{Eval}}
\newcommand{\KeyUpd}{\mathsf{KeyUpd}}
\newcommand{\CtUpd}{\mathsf{CtUpd}}
\newcommand{\EC}{\mathsf{EC}}
\newcommand{\Ecd}{\mathsf{Ecd}}
\newcommand{\Dcd}{\mathsf{Dcd}}
\newcommand{\Sum}{\mathsf{Sum}}
\newcommand{\pk}{\mathsf{pk}}
\newcommand{\sk}{\mathsf{sk}}
\newcommand{\ct}{\mathsf{ct}}
\newcommand{\tr}{\mathop{\mathrm{tr}}\limits}
\renewcommand{\vec}{\mathop{\mathrm{vec}}\limits}
\newcommand{\argmin}{\mathop{\mathrm{arg~min}}\limits}
\newcommand{\EV}{\mathop{\mathbb{E}}\limits}
\begin{document}

\articletype{ARTICLE TEMPLATE}

\title{Optimal Security Parameter for Encrypted Control Systems Against Eavesdropper and Malicious Server}

\author{
\name{Kaoru Teranishi\textsuperscript{a,b}\thanks{CONTACT Kaoru Teranishi. Email: teranishi@uec.ac.jp} and Kiminao Kogiso\textsuperscript{a}}
\affil{\textsuperscript{a}Department of Mechanical and Intelligent Systems Engineering, The University of Electro-Communications, 1-5-1 Chofugaoka, Chofu, Tokyo 1828585, Japan; \textsuperscript{b}Research Fellow of Japan Society for the Promotion of Science, Kojimachi Business Center Building, 5-3-1 Kojimachi, Chiyoda-ku, Tokyo 1020083, Japan}
}

\maketitle

\begin{abstract}
A sample identifying complexity and a sample deciphering time have been introduced in a previous study to capture an estimation error and a computation time of system identification by adversaries.
The quantities play a crucial role in defining the security of encrypted control systems and designing a security parameter.
This study proposes an optimal security parameter for an encrypted control system under a network eavesdropper and a malicious controller server who attempt to identify system parameters using a least squares method.
The security parameter design is achieved based on a modification of conventional homomorphic encryption for improving a sample deciphering time and a novel sample identifying complexity, characterized by controllability Gramians and the variance ratio of identification input to system noise.
The effectiveness of the proposed design method for a security parameter is demonstrated through numerical simulations.
\end{abstract}

\begin{keywords}
Encrypted control; homomorphic encryption; cyber-physical system; control systems security; cryptography; security parameter
\end{keywords}

\section{Introduction}
\label{sec:introduction}

Outsourcing computation of controllers to a cloud server, such as control as a service (CaaS), is one form of realization of cyber-physical systems that improve the efficiency and flexibility of traditional control systems.
However, such computing services often face threats that adversaries eavesdrop and learn about private information of control systems.
Homomorphic encryption is the major countermeasure against such threats because it provides direct computation on encrypted data without accessing the original messages~\cite{acar2019}.
The encryption was applied to realize an encrypted control that is a framework for secure outsourcing computation of control algorithms~\cite{kogiso2015,farokhi2017,kim2016,kim2022,darup2021}.
Owning to the benefits of encrypted control, various controls, such as model predictive control~\cite{alexandru2018,darup2018a}, motion control~\cite{qiu2019,shono2022}, and reinforcement learning~\cite{suh2021a}, were implemented in encrypted forms.

Some recent studies have defined and analyzed the security of encrypted control systems through two approaches to clarify how secure an encrypted control system is against what type of adversary.
One of them is a cryptographic approach that defines the provable security of encrypted controls and reveals a relation between the security and existing security notions in cryptography~\cite{teranishi2022}.
In this security definition, an adversary and information used for attacks are formulated as a probabilistic polynomial-time algorithm and its inputs, respectively, instead of assuming specific attacks.
Using the security notion, we can analyze qualitative security for a broad class of encrypted control systems.
In contrast, other studies employed a control theoretic approach that considers the security of encrypted control systems under an adversary who wants to learn the system parameters by system identification~\cite{teranishi2022a,teranishi2023}.
The security in this approach is defined by the system identification error and computation time for the process.
Unlike the cryptographic approach, the security notion in this approach enables quantifying a security level of encrypted control systems.
The studies also solved an optimization problem for designing a security parameter to minimize the computation costs of encryption algorithms while satisfying the desired security level.

This study focuses on designing an optimal security parameter for encrypted control systems under an adversary who attempts to identify the system and input matrices of a system controlled by an encrypted controller, although the conventional works~\cite{teranishi2022a,teranishi2023} dealt with an adversary identifying a system matrix of a closed-loop system.
Such an adversary represents a network eavesdropper executing man-in-the-middle attacks and a malicious controller server infected by malware or spoofing an authorized server computing encrypted control algorithms.
Furthermore, the adversary employs a basic least squares identification method, which is more prevalent in practical use than the Bayesian estimation method discussed in~\cite{teranishi2022a}.

Unfortunately, the existing design methods for an optimal security parameter are effective only against a network eavesdropper.
That is, they cannot work for a malicious controller server appropriately.
The existing methods must share a token in updatable homomorphic encryption, of which key pairs are updated every sampling period, with a controller server to update controller ciphertexts.
Furthermore, the update token needs to be kept secret against adversaries because it can be exploited to estimate past and future key pairs from the current key pair.
Indeed, the previous study~\cite{teranishi2022a} assumed that an update token is transmitted by a secure communication channel using traditional symmetric-key encryption, such as AES.
However, such an assumption is not valid for a malicious controller server because the ciphertext of an update token must be decrypted on the server.
Hence, the design of an optimal security parameter for encrypted control systems is still a challenging problem when an adversary is a malicious controller server rather than a network eavesdropper.

To solve the problem, this study modifies the updatable homomorphic encryption in~\cite{teranishi2022a}.
The modified encryption enables the computation of encrypted data and correct decryption without sharing an update token while updating key pairs.
Furthermore, we propose a novel sample identifying complexity, which is characterized by controllability Gramians and variance ratio of adversarial input for the system identification and system noise, for defining the security of encrypted control systems under the eavesdropper and malicious server.
Using the proposed complexity, we can estimate how precisely the adversaries are expected to identify the system and input matrices of a given system for a certain number of data.
We design an optimal security parameter for an encrypted control system under the adversaries using the proposed updatable homomorphic encryption and sample identifying complexity.

The rest of this paper is organized as follows.
\secref{sec:preliminaries} defines the syntax and security of homomorphic encryption and encrypted control.
\secref{sec:threat} formulates a threat model considered in this study.
\secref{sec:updatable_homomorphic_encryption} presents a modified homomorphic encryption.
\secref{sec:security_parameter_design} proposes a novel sample identifying complexity and an optimal security parameter for the modified encryption.
\secref{sec:simulation} shows the results of numerical simulations.
\secref{sec:conclusion} describes the conclusions and future work.

\section{Preliminaries}
\label{sec:preliminaries}

\subsection{Notation}
\label{sec:notation}

The sets of natural numbers, integers, and real numbers are denoted by $\N$, $\Z$, and $\R$, respectively.
Key, plaintext, and ciphertext spaces are denoted by $\K$, $\M$, and $\C$, respectively.
Define the set $\Z^+ \coloneqq \{z \in \Z \mid 0 \le z\}$ and a bounded set $\X \subset \R$.
The sets of $n$-dimensional vectors and $m$-by-$n$ matrices of which elements and entries belong to a set $\mathcal{A}$ are denoted by $\mathcal{A}^n$ and $\mathcal{A}^{m \times n}$, respectively.
The $i$th element of a vector $v \in \mathcal{A}^n$ and the $(i,j)$ entry of a matrix $M \in \mathcal{A}$ are denoted by $v_i$ and $M_{ij}$, respectively.
The Euclidean norm and the Frobenius norm of $v \in \mathcal{A}^n$ and $M \in \mathcal{A}^{m \times n}$ are denoted by $\|v\|_2$ and $\|M\|_F$, respectively.
The column stack vector of $M$ is defined as $\vec(M) \coloneqq [M_1^\top \, \cdots \, M_n^\top]^\top$, where $M_i$ is the $i$th column vector of $M$.

\subsection{Homomorphic encryption}
\label{sec:he}

This section introduces the syntax and security level of homomorphic encryption.
First, the syntax of homomorphic encryption~\cite{acar2019} is defined as follows.

\begin{definition}
\label{def:he}
    Homomorphic encryption is $(\KeyGen, \Enc, \Dec, \Eval)$ such that:
    \begin{itemize}
        \item $(\pk, \sk) \gets \KeyGen(1^\lambda)$: A key generation algorithm takes $1^\lambda$ as input and outputs a key pair $(\pk, \sk) \in \K$, where $1^\lambda$ is the unary representation of a security parameter $\lambda \in \N$, $\pk$ is a public key, and $\sk$ is a secret key.
        \item $\ct \gets \Enc(\pk, m)$: An encryption algorithm takes a public key $\pk$ and a plaintext $m \in \M$ as input and outputs a ciphertext $\ct \in \C$.
        \item $m \gets \Dec(\sk, \ct)$: A decryption algorithm takes a secret key $\sk$ and a ciphertext $\ct \in \C$ as input and outputs a plaintext $m \in \M$.
        \item $\ct \gets \Eval(\pk, \ct_1, \ct_2)$: A homomorphic evaluation algorithm takes a public key $\pk$ and ciphertexts $\ct_1, \ct_2 \in \C$ as input and outputs a ciphertext $\ct \in \C$.
        \item Correctness: $\Dec(\sk, \Enc(\pk, m)) = m$ holds for any $(\pk, \sk) \gets \KeyGen(1^\lambda)$ and for any $m \in \M$.
        \item Homomorphism: $\Dec(\sk, \Eval(\pk, \ct_1, \ct_2)) = m_1 \bullet m_2$ holds for any $(\pk, \sk) \gets \KeyGen(1^\lambda)$ and for any $m_1, m_2 \in \M$, where $\ct_1 \gets \Enc(\pk, m_1)$, $\ct_2 \gets \Enc(\pk, m_2)$, and $\bullet$ is a binary operation on $\M$.
    \end{itemize}
\end{definition}

\begin{example}
\label{ex:elgamal}
    The algorithms of ElGamal encryption~\cite{ElGamal85} are as follows.
    \begin{itemize}
        \item $(\pk, \sk) \gets \KeyGen(1^\lambda)$:
        Randomly generate prime numbers $q = q(\lambda)$ and $p = p(\lambda)$ such that $p = nq + 1$ and $n \in \N$.
        Randomly choose $s \in \Z_q$.
        Output $(\pk, \sk) = ((p, q, g, g^s \bmod p), s)$.
        Plaintext and ciphertext spaces are $\M = \G = \{g^i \bmod p \mid i \in \Z_q\}$ and $\C = \G^2$, respectively, where $g^q \mod p = 1$.
        \item $\ct \gets \Enc(\pk, m)$:
        Parse $\pk = (p, q, g, h)$.
        Randomly choose $r \in \Z_q$.
        Output $\ct = (g^r \bmod p, mh^r \bmod p)$.
        \item $m \gets \Dec(\sk, \ct)$:
        Parse $\ct = (c_1, c_2)$.
        Set $s = \sk$.
        Output $m = c_1^{-s}c_2 \bmod p$.
        \item $\ct \gets \Eval(\pk, \ct_1, \ct_2)$:
        Parse $\pk = (p, q, g, h)$, $\ct_1 = (c_{11}, c_{12})$, and $\ct_2 = (c_{21}, c_{22})$.
        Output $\ct = (c_{11}c_{21} \bmod p, c_{12}c_{22} \bmod p)$.
    \end{itemize}
    The ElGamal encryption is multiplicative homomorphic encryption, i.e., $\Dec(\sk, \Eval(\pk, \ct_1, \ct_2)) = m_1 m_2 \bmod p$.
\end{example}

Next, updatable homomorphic encryption~\cite{teranishi2023} is defined as follows.

\begin{definition}
\label{def:upd_he}
    Let $\Pi=(\KeyGen, \Enc, \Dec, \Eval)$ be homomorphic encryption.
    Updatable homomorphic encryption is $(\Pi, \KeyUpd, \CtUpd)$ such that:
    \begin{itemize}
        \item $(\pk_{t+1}, \sk_{t+1}, \sigma_t) \gets \KeyUpd(\pk_t, \sk_t)$: A key update algorithm takes a key pair $(\pk_t, \sk_t) \in \K$ at time $t \in \Z^+$ as input and outputs an updated key pair $(\pk_{t+1}, \sk_{t+1}) \in \K$ and an update token $\sigma_t$.
        \item $\ct_{t+1} \gets \CtUpd(\ct_t, \sigma_t)$: A ciphertext update algorithm takes a ciphertext $\ct_t \in \C$ and an update token $\sigma_t$ at time $t \in \Z^+$ as input and outputs an updated ciphertext $\ct_{t+1} \in \C$.
        \item Correctness: $\Dec(\sk_t, \ct_t) = \Dec(\sk_t, \Enc(\pk_t, m)) = m$ holds for any $(\pk_0, \sk_0) \gets \KeyGen(1^\lambda)$, for any $m \in \M$, and for all $t \in \Z^+$, where $\ct_0 \gets \Enc(\pk_0, m)$, $(\pk_{t+1}, \sk_{t+1}, \sigma_t) \gets \KeyUpd(\pk_t, \sk_t)$, and $\ct_{t+1} \gets \CtUpd(\ct_t, \sigma_t)$.
        \item Homomorphism: $\Dec(\sk_t, \Eval(\pk_t, \ct_{1,t}, \ct_{2,t})) = \Dec(\sk_t, \Eval(\pk_t, \Enc(\pk_t, m_1), \Enc(\pk_t, m_2))) = m_1 \bullet m_2$ holds for any $(\pk_0, \sk_0) \gets \KeyGen(1^\lambda)$, for any $m_i \in \M$, and for all $t \in \Z^+$, where $\ct_{i,0} \gets \Enc(\pk_0, m_i)$, $(\pk_{t+1}, \sk_{t+1}, \sigma_t) \gets \KeyUpd(\pk_t, \sk_t)$, $\ct_{i,t+1} \gets \CtUpd(\ct_{i,t}, \sigma_t)$, and $i = 1, 2$.
    \end{itemize}
\end{definition}

\begin{example}
\label{ex:dyn_elgamal}
    The algorithms of dynamic-key ElGamal encryption~\cite{teranishi2022a} are as follows.
    \begin{itemize}
        \item The key generation, encryption, decryption, and homomorphic evaluation algorithms are identical to the ElGamal encryption in \exref{ex:elgamal}.
        \item $(\pk_{t+1}, \sk_{t+1}, \sigma_t) \gets \KeyUpd(\pk_t, \sk_t)$:
        Parse $\pk_t = (p, q, g, h)$.
        Set $s = \sk_t$.
        Randomly choose $s' \in \Z_q$.
        Set $d = s' - s \bmod p$ and $h' = hg^d \bmod p$.
        Output $(\pk_{t+1}, \sk_{t+1}, \sigma_t) = ((p, q, g, h'), s', (h, d))$.
        \item $\ct_{t+1} \gets \CtUpd(\ct_t, \sigma_t)$:
        Parse $\ct_t = (c_1, c_2)$ and $\sigma_t = (h, d)$.
        Randomly choose $r \in \Z_q$.
        Output $\ct_{t+1} = (c_1 g^r \bmod p, (c_1 g^r)^d c_2 h^r \bmod p)$.
    \end{itemize}
    The dynamic-key ElGamal encryption is updatable multiplicative homomorphic encryption, i.e., $\Dec(\sk_t, \Eval(\pk, \ct_{1,t}, \ct_{2,t})) =  \Dec(\sk_t, \Eval(\pk_t, \Enc(\pk_t, m_1), \Enc(\pk_t, m_2))) = m_1 m_2 \bmod p$.
\end{example}

This study quantifies the security level of an encryption scheme by the number of bits as follows~\cite{Katz21}.

\begin{definition}
\label{def:bit_sec}
    An encryption scheme satisfies $\lambda$ bit security if at least $2^\lambda$ operations are required for breaking the scheme.
\end{definition}

A security parameter in \defref{def:he} quantifies the level of bit security for (updatable) homomorphic encryption.
We address how to design the number of bits, $\lambda$, such that an encrypted control system becomes secure.

\subsection{Encrypted control}
\label{sec:ec}

This section introduces the syntax and security definition of encrypted control with updatable homomorphic encryption.

\begin{definition}
\label{def:ec}
    Given updatable homomorphic encryption and a controller $f:(\Phi, \xi) \mapsto \psi$, where $\Phi \in \X^{\alpha \times \beta}$ is a controller parameter, $\xi \in \X^\beta$ is a controller input, and $\psi \in \X^\alpha$ is a controller output.
    Suppose there exist an encoder $\Ecd$ and a decoder $\Dcd$ such that:
    \begin{itemize}
        \item $m \gets \Ecd(x; \Delta)$: An encoder algorithm takes $x \in \X$ and a scaling factor $\Delta \in \R$ as input and outputs a plaintext $m \in \M$.
        \item $x \gets \Dcd(m; \Delta)$: A decoder algorithm takes a plaintext $m \in \M$ and a scaling factor $\Delta \in \R$ as input and outputs $x \in \X$.
    \end{itemize}
    An encrypted controller of $f$ is $\EC$ such that:
    \begin{itemize}
        \item $\ct_\psi \gets \EC(\pk, \ct_\Phi, \ct_\xi)$: An encrypted control algorithm takes a public key $\pk$ and ciphertexts $\ct_\Phi \in \C^{\alpha \times \beta}, \ct_\xi \in \C^\beta$ as input and outputs a ciphertext $\ct_\psi \in \C^\alpha$.
        \item $\Dcd(\Dec(\sk_t, \EC(\pk_t, \ct_{\Phi,t}, \ct_{\xi,t})); \Delta) \simeq f(\Phi, \xi_t)$ holds for some $\Delta \in \R$, for all $t\in\Z^+$, for any $(\pk_0, \sk_0) \gets \KeyGen(1^\lambda)$, for any $\Phi \in \X^{\alpha \times \beta}$, and for any $\xi_t \in \X^\beta$, where $(\pk_{t+1}, \sk_{t+1}, \sigma_t) \gets \KeyUpd(\pk_t, \sk_t)$, $\ct_{\Phi,0} \gets \Enc(\pk_0, \Ecd(\Phi; \Delta))$, $\ct_{\Phi,t+1} \gets \CtUpd(\ct_{\Phi,t}, \sigma_t)$, $\ct_{\xi,t} \gets \Enc(\pk_t, \Ecd(\xi_t; \Delta))$, and the algorithms perform each element of matrices and vectors.
    \end{itemize}
\end{definition}

The controller parameter and input need to be encoded to plaintexts by the encoder $\Ecd$ before encryption because control systems typically operate over real numbers.
Although the encoding causes quantization errors, we ignore the errors for simplicity.

The security of encrypted control systems is defined based on a kind of sample complexities of system identification and computation time for breaking ciphertexts used in the system identification~\cite{teranishi2022a}.
The complexity and computatin time are called a sample identifying complexity and a sample deciphering time, respectively, defined as follows.

\begin{definition}
\label{def:sic}
    Let $N$ be a sample size for system identification by an adversary.
    A sample identifying complexity $\gamma$ is a function satisfying $\gamma(N) \le \EV[\epsilon(N)]$, where $\epsilon$ is an estimation error of the system identification.
\end{definition}

\begin{definition}
\label{def:sdt}
    Suppose an adversary uses a computer of $\Upsilon$~FLOPS.
    A sample deciphering time $\tau$ is a computation time required for breaking $N$ ciphertexts of an updatable homomorphic encryption that satisfies $\lambda$ bit security used for system identification by an adversary, namely $\tau(N, \lambda) = 2^\lambda N / \Upsilon$.
\end{definition}

The security of encrypted control systems is defined using the sample identifying complexity and sample deciphering time as follows.

\begin{definition}
\label{def:security}
    Let $\gamma_c$ be an acceptable estimation error, and $\tau_c$ be a defense period.
    An encrypted control system is secure if there does not exist a sample size $N$ such that $\gamma(N) < \gamma_c$ and $\tau(N, \lambda) \le \tau_c$, where $\gamma$ and $\tau$ are defined in \defref{def:sic} and \defref{def:sdt}, respectively.
    Otherwise, the encrypted control system is unsecure.
\end{definition}

Note that a pair of $\gamma_c$ and $\tau_c$ shows a security level of encrypted control systems and is used as design parameters for a security parameter later.

\begin{remark}
    The sample deciphering time in the case of using a typical homomorphic encryption with a fixed key pair is computed as $\tau(1, \lambda)$ regardless of a sample size $N$ because an adversary can obtain the original message of any ciphertext once the encryption scheme is broken.
    However,  the sample deciphering time in \defref{def:sdt} depends on $N$ because ciphertexts at different times are corresponding to different key pairs when updatable homomorphic encryption is used.
\end{remark}

\section{Threat Model}
\label{sec:threat}

This section formulates a threat model considered in this study.
\figref{fig:adversary} shows two types of adversaries that aim to identify system parameters.
Eve in \figref{fig:adversary}\subref{fig:sia_eavesdropper} is an adversary eavesdropping on network signals and exploiting illegal input signals to a communication channel from the encrypted controller to the decryptor.
This type of adversary represents man-in-the-middle attacks.
\figref{fig:adversary}\subref{fig:sia_server} depicts another adversary performing system identification.
In the figure, Eve is in a server that computes an encrypted control algorithm.
The adversary records inputs and outputs of the encrypted control algorithm and returns falsified outputs.
Thus, it is called a malicious server that represents a server infected by malware or spoofing as an authorized agent.
It should be noted here that the signal flow of encrypted control systems under the adversaries in \figref{fig:adversary} is the same structure.
Hence, we can deal with the attacks by a unified threat model without assuming the adversary types.

\begin{figure}[t]
    \centering
    \subfloat[Eavesdropper.]{\includegraphics[scale=1]{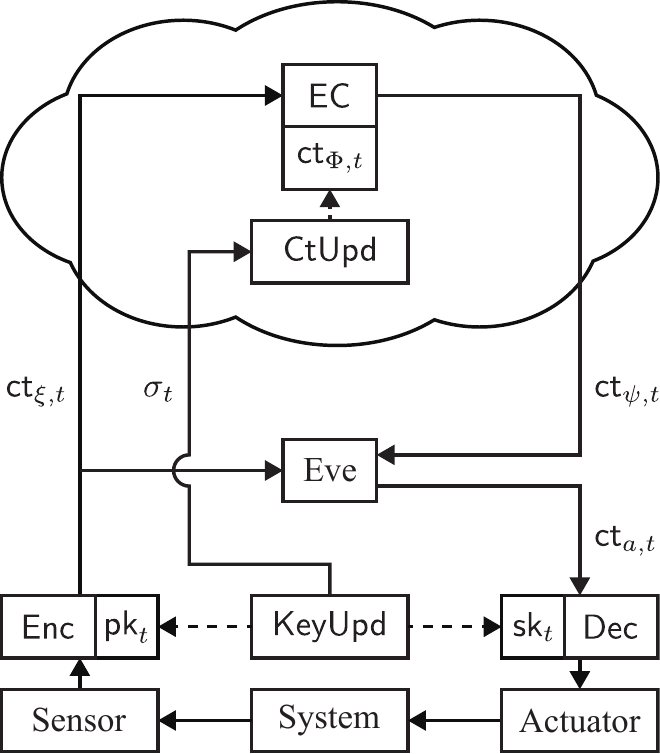}\label{fig:sia_eavesdropper}} \quad\quad
    \subfloat[Malicious server.]{\includegraphics[scale=1]{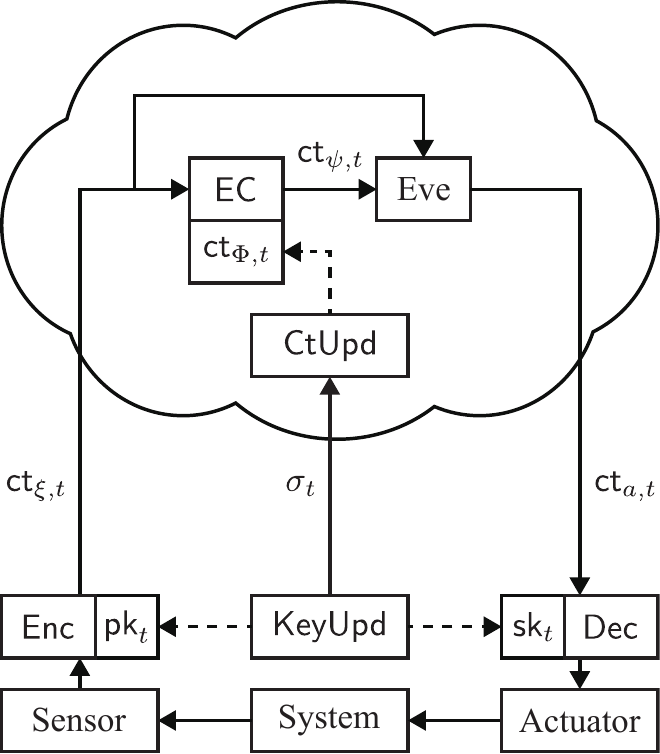}\label{fig:sia_server}}
    \caption{Two types of adversaries identifying the system.}
    \label{fig:adversary}
\end{figure}

Suppose the system in \figref{fig:adversary} is given as
\begin{equation}
    x_{t+1} = A x_t + B u_t + w_t,
    \label{eq:system}
\end{equation}
where $t \in \Z^+$ is a time, $x \in \R^n$ is a state, $u \in \R^m$ is an input, and $w \in \R^n$ is a noise.
Suppose $x_0$ and $w_t$ are independent and identically distributed over the Gaussian distribution with mean $\bm{0}$ and variance $\sigma_w^2 I$.
$A \in \R^{n \times n}$ and $B \in \R^{n \times m}$ are system parameters, and $A$ is assumed to be stable.
The state of \eqref{eq:system} is encrypted by updatable homomorphic encryption as $\ct_{x,t} \gets \Enc(\pk_t, \Ecd(x_t; \Delta))$ and transmitted to a controller server, where $(\pk_0, \sk_0) \gets \KeyGen(1^\lambda)$, and $(\pk_{t+1}, \sk_{t+1}, \sigma_t) \gets \KeyUpd(\pk_t, \sk_t)$.
The server returns an input ciphertext $\ct_{u,t} \gets \EC(\pk_t, \ct_{\Phi,t}, \ct_{x,t})$ to the system, where $\Phi$ is a controller parameter, $\ct_{\Phi,0} \gets \Enc(\pk, \Ecd(\Phi; \Delta))$, and $\ct_{\Phi,t+1} \gets \CtUpd(\ct_{\Phi,t}, \sigma_t)$.
The system decrypts the input ciphertext and obtains an input as $u_t \gets \Dcd(\Dec(\sk_t, \ct_{u,t}); \Delta)$.

This study considers an adversary following the protocol: 1) collecting some encrypted samples, 2) exposing the original data by breaking the samples, and 3) identifying system parameters $(A, B)$ by a least squares method with the exposed data.
The attack scenario is formally defined as follows.

\begin{definition}
\label{def:adversary}
    The adversary attempts to identify $(A, B)$ of \eqref{eq:system} by the following procedure.
    \begin{enumerate}
        \item The adversary injects malicious inputs $u_t = a_t$ for $t \in [t_s, t_f]$ and collects $N = t_f - t_s + 1$ pairs of input and state ciphertexts $\{(\ct_{u,t}, \ct_{x,t})\}_{t=t_s}^{t_f}$.
        \item The adversary exposes $\{(u_t, x_t)\}_{t=t_s}^{t_f}$ deciphering the ciphertexts.
        \item The adversary estimates $(A, B)$ by a least squares method with the exposed data.
    \end{enumerate}
\end{definition}

For the third step in \defref{def:adversary}, we consider the following least squares identification method.
Define data matrices
\begin{alignat*}{2}
    X_f &=
    \begin{bmatrix}
        x_{t_s+1} & \cdots & x_{t_f}
    \end{bmatrix}, &\quad
    X_{p} &=
    \begin{bmatrix}
        x_{t_s} & \cdots & x_{t_f-1}
    \end{bmatrix}, \\
    U_p &=
    \begin{bmatrix}
        u_{t_s} & \cdots & u_{t_f-1}
    \end{bmatrix}, &\quad
    W_p &=
    \begin{bmatrix}
        w_{t_s} & \cdots & w_{t_f-1}
    \end{bmatrix}.
\end{alignat*}
It follows from \eqref{eq:system} that
\begin{equation}
    X_f = A X_p + B U_p + W_p =
    \begin{bmatrix}
        A & B
    \end{bmatrix}
    \begin{bmatrix}
        X_p \\
        U_p
    \end{bmatrix} + W_p.
    \label{eq:data}
\end{equation}
The least squares estimators $(\hat{A},\hat{B})$ of $(A,B)$ are given as
\begin{equation}
    \begin{bmatrix}
        \hat{A} & \hat{B}
    \end{bmatrix} =
    \argmin_{[A \ B]} \left\| X_f -
    \begin{bmatrix}
        A & B
    \end{bmatrix}
    \begin{bmatrix}
        X_p \\
        U_p
    \end{bmatrix}
    \right\|_F^2 = X_f
    \begin{bmatrix}
        X_p \\
        U_p
    \end{bmatrix}^+,
    \label{eq:estimate}
\end{equation}
where $([X_p^\top\ U_p^\top]^\top)^+$ is the pseudo inverse matrix of $[X_p^\top\ U_p^\top]^\top$.

\begin{remark}
    In the first step of \defref{def:adversary}, the malicious inputs $a_t$ can be injected properly even though control inputs are encrypted by updatable homomorphic encryption because, in general, an encryption scheme and a public key are public information.
    Furthermore, even if an adversary does not know a public key, the adversary can falsify ciphertexts using malleability~\cite{cheon2018a,teranishi2019a,cheon2020,fauser2020}.
\end{remark}

\section{Secure Updatable Homomorphic Encryption Against Malicious Server}
\label{sec:updatable_homomorphic_encryption}

This section presents a modification of the updatable homomorphic encryption scheme in \exref{ex:dyn_elgamal}.
To begin with, we introduce a desired cryptographic property of the encryption scheme~\cite{teranishi2022a}.

\begin{proposition}
\label{prop:impossibility}
    Consider the updatable homomorphic encryption in \exref{ex:dyn_elgamal}.
    Suppose an adversary has $\pk_t$, $\sk_t$, and $\ct_t$.
    The probabilities $\Pr(\hat{\sk}_{t-1} = \sk_{t-1})$ and $\Pr(\hat{\sk}_{t+1} = \sk_{t+1})$ are negligibly small for all $t \in \N$, for any $(\pk_0, \sk_0)$, and for any $\ct_0$, where $(\pk_{t+1}, \sk_{t+1}, \sigma_t) \gets \KeyUpd(\pk_t, \sk_t)$, $\ct_{t+1} \gets \CtUpd(\ct_t, \sigma_t)$, and $\hat{\sk}_{t-1}$ and $\hat{\sk}_{t+1}$ are adversary's estimates of $\sk_{t-1}$ and $\sk_{t+1}$, respectively.
\end{proposition}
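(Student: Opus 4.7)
The plan is to reduce the proposition to the observation that, by construction of $\KeyUpd$ in \exref{ex:dyn_elgamal}, the secret key at each time step is sampled uniformly at random from $\Z_q$ independently of the previous key, so that the adversary's conditional distribution over $\sk_{t\pm 1}$ is essentially uniform.

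First, I would handle the forward direction. Inspecting $\KeyUpd$, the next secret key $\sk_{t+1} = s'$ is chosen uniformly at random from $\Z_q$ inside the algorithm, and this random draw is statistically independent of $\sk_t$. Hence for any (even unbounded) estimator $\hat{\sk}_{t+1}$ the success probability conditioned on $(\pk_t,\sk_t)$ is bounded by $\max_{s'\in\Z_q}\Pr(\sk_{t+1}=s'\mid \sk_t)=1/q$.

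Second, I would invert the argument for the backward direction using Bayes' rule. By the same $\KeyUpd$ structure, $\sk_t$ was itself drawn uniformly from $\Z_q$ conditionally on $\sk_{t-1}$. Assuming the prior on $\sk_{t-1}$ (produced either by $\KeyGen(1^\lambda)$ or by an earlier invocation of $\KeyUpd$) is uniform on $\Z_q$, the transition $s_{t-1}\mapsto s_t$ is doubly stochastic, so the posterior $\Pr(\sk_{t-1}=s\mid \sk_t)$ is again uniform on $\Z_q$, yielding $\Pr(\hat{\sk}_{t-1}=\sk_{t-1})\le 1/q$.

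Third, I would argue that the extra knowledge of $\pk_t$ and $\ct_t$ in the adversary's view does not tighten these bounds. The public key $\pk_t=(p,q,g,g^{\sk_t}\bmod p)$ is a deterministic function of $\sk_t$ and the public parameters, so it is information-free given $\sk_t$. For the ciphertext, inspection of $\CtUpd$ shows that it injects a fresh exponent $r\gets\Z_q$ that rerandomizes $\ct_t$; a direct computation gives $\ct_t=(g^{r_0+\rho}\bmod p,\, m\,h_t^{r_0+\rho}\bmod p)$ for an aggregated exponent $\rho$ uniform in $\Z_q$ and $h_t=g^{\sk_t}\bmod p$, so $\ct_t$ is identically distributed to a fresh encryption of the underlying plaintext $m$ under $\pk_t$. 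Its distribution depends only on $m$ and $\sk_t$, not on past or future keys. Finally, since $q=q(\lambda)$ is chosen so that the underlying discrete-logarithm-based scheme achieves $\lambda$-bit security, $1/q$ is a negligible function of $\lambda$, completing the bound.

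The hard part will be the last sub-argument: although the adversary is explicitly denied $\sigma_{t-1}=(h_{t-1},d_{t-1})$ (which would leak $\sk_{t-1}=\sk_t-d_{t-1}\bmod q$), one must verify cleanly that the view $\ct_t$ alone is independent of $\sk_{t-1}$ in the presence of $\sk_t$. I expect this to follow directly from the rerandomization step in $\CtUpd$, but it is the place where one has to be careful not to implicitly assume that any part of the update token remains private inside $\ct_t$.
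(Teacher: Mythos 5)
Your argument is correct, but note that the paper does not actually prove this proposition at all: its ``proof'' is a one-line citation to Proposition~2 of the prior work \cite{teranishi2022a}. Your write-up is therefore a self-contained reconstruction rather than a parallel to anything in this paper, and it captures the right mechanism: in $\KeyUpd$ the fresh key $s'$ is drawn uniformly from $\Z_q$ \emph{independently} of $s$, so $\sk_{t+1}$ is unpredictable given $\sk_t$, and symmetrically $\sk_t$ carries no information about $\sk_{t-1}$; the auxiliary view is harmless because $\pk_t=(p,q,g,g^{\sk_t}\bmod p)$ is a deterministic function of $\sk_t$ and because $\CtUpd$ rerandomizes the ciphertext into a fresh encryption under $\pk_t$ (your computation $\ct_{t+1}=(g^{r_0+r},\,m(h')^{r_0+r})$ checks out), so everything the adversary sees is a function of $(\sk_t,m)$ and fresh randomness alone; the bound $1/q$ is negligible in $\lambda$ since $q=q(\lambda)$. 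Two small refinements: for the backward direction you do not need the detour through double stochasticity and Bayes' rule --- double stochasticity by itself only gives posterior equal to the forward transition probability, and what actually closes the argument is that the transition kernel is the \emph{uniform} kernel, i.e.\ $\sk_t$ is statistically independent of $\sk_{t-1}$, so the posterior trivially equals the uniform prior; stating it as independence is both shorter and airtight. Also, the reduction in $\KeyUpd$ is written as $d=s'-s\bmod p$ in \exref{ex:dyn_elgamal} although exponents live in $\Z_q$; this is a quirk of the paper's presentation, not of your proof, but it is worth flagging if you formalize the uniformity of $s'-s$.
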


\begin{proof}
    See Proposition 2 in~\cite{teranishi2022a}.
\end{proof}

The proposition implies the impossibility for estimating the previous and next secret keys from the current secret key.
Hence, the proposition is the foundation for that the sample deciphering time in \defref{def:sdt} depends on a sample size $N$ because an adversary must keep breaking $N - 1$ ciphertexts even though the adversary succeeds to break one of $N$ ciphertexts.
However, the impossibility makes sense only for a network eavesdropper because the proposition is satisfied as long as an update token is secret against the adversary.
The following proposition reveals that there exists a simple attack to obtain the next secret key from the current secret key and update token.

\begin{proposition}
\label{prop:vulnerability}
    Consider the updatable homomorphic encryption in \exref{ex:dyn_elgamal}.
    Suppose an adversary has $\sk_t$ and $\sigma_t$.
    Then, the adversary can achieve $\Pr(\hat{\sk}_{t+1} = \sk_{t+1}) = 1$ for all $t \in \N$ and for any $(\pk_0, \sk_0) \gets \KeyGen(1^\lambda)$, where $(\pk_{t+1}, \sk_{t+1}, \sigma_t) \gets \KeyUpd(\pk_t, \sk_t)$, and $\hat{\sk}_{t+1}$ is adversary's estimate of $\sk_{t+1}$.
\end{proposition}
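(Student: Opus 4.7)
The plan is to exhibit an explicit deterministic algorithm that, on input $(\sk_t, \sigma_t)$, outputs $\sk_{t+1}$, and then verify that its output matches the actual updated key with probability one. Because the claim is a probability-one statement rather than a negligibility argument, no reductionist security proof is required; it suffices to unwind the structure of $\KeyUpd$ in \exref{ex:dyn_elgamal} and read off the linear relation between consecutive secret keys through the update token.

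First I would recall the specification of $\KeyUpd$: given $(\pk_t,\sk_t)$ with $\sk_t = s$, the algorithm samples $s' \in \Z_q$, sets $d = s' - s \bmod p$, and outputs the triple $(\pk_{t+1}, \sk_{t+1}, \sigma_t) = ((p,q,g,hg^d \bmod p),\, s',\, (h,d))$. The key observation is that $d$ is exposed in the clear as the second component of $\sigma_t$, and by construction $s' \equiv s + d \pmod{p}$, i.e.\ $\sk_{t+1} \equiv \sk_t + d \pmod{p}$. This is a purely algebraic identity that holds for every execution of $\KeyUpd$, not merely on average.

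Next I would describe the adversary's strategy explicitly: parse $\sigma_t$ as $(h,d)$, set $\hat{\sk}_{t+1} \coloneqq \sk_t + d \bmod p$, and output $\hat{\sk}_{t+1}$. Substituting the identity from the previous step gives $\hat{\sk}_{t+1} = s + d \bmod p = s' = \sk_{t+1}$, so $\Pr(\hat{\sk}_{t+1} = \sk_{t+1}) = 1$ holds deterministically for all $t \in \N$ and for every key pair $(\pk_0,\sk_0) \gets \KeyGen(1^\lambda)$, regardless of the internal randomness $s'$ used in the key update. Since the adversary's computation invokes only modular addition, the attack is also efficient, which strengthens the conclusion but is not formally required by the statement.

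There is no genuine obstacle here; the content of the proposition is that the update token leaks the full additive offset between consecutive secret keys, and the proof is a one-line unpacking of the $\KeyUpd$ specification. The only subtlety worth flagging in the write-up is the mild notational point that the reduction is modulo $p$ (as written in \exref{ex:dyn_elgamal}), so one should be careful that the identity $\sk_{t+1} = \sk_t + d$ is interpreted in the same modulus used by $\KeyUpd$ itself, ensuring that the recovered value lies in the correct key space.
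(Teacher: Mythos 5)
Your proposal is correct and is essentially identical to the paper's own proof: both simply read the additive offset $d$ out of $\sigma_t=(h,d)$ and compute $\hat{\sk}_{t+1}=\sk_t+d=s+(s'-s)=s'$. The only difference is your (reasonable) remark about tracking the modulus, which the paper's proof leaves implicit.
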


\begin{proof}
    Let $\sk_t = s$ and $\sk_{t+1} = s'$.
    Here $d = s' - s$ and $\sigma_t = (h, d)$ for some $h$, and thus the adversary can estimate $\sk_{t+1}$ as $\hat{\sk}_{t+1} = \sk_t + d = s + (s' - s) = s'$.
\end{proof}

By the proposition, the conventional encryption scheme cannot satisfy the impossibility against a malicious server who must has an update token for updating a controller parameter ciphertext as in \defref{def:ec}.
This study presents the modified homomorphic evaluation and decryption algorithms to solve this problem.

\begin{definition}
\label{def:modification}
    Consider the encryption scheme in \exref{ex:dyn_elgamal}.
    Define a modified homomorphic evaluation algorithm $\overline{\Eval}$ and a modified decryption algorithm $\overline{\Dec}$ as follows.
    \begin{itemize}
        \item $\overline{\ct} \gets \overline{\Eval}(\pk, \ct_1, \ct_2)$:
        Compute $\ct \gets \Eval(\pk, \ct_1, \ct_2)$.
        Parse $\ct_1 = (c_{11}, c_{12})$, $\ct_2 = (c_{21}, c_{22})$, and $\ct = (c_1, c_2)$.
        Return $\overline{\ct} = (c_{11}, c_{21}, c_2)$.
        \item $m \gets \overline{\Dec}(\sk_1, \sk_2, \overline{\ct})$:
        Parse $\overline{\ct} = (c_1, c_2, c_3)$.
        Compute $\tilde{c} \gets \Dec(\sk_2, (c_2, c_3))$.
        Return $m \gets \Dec(\sk_1, (c_1, \tilde{c}))$.
    \end{itemize}
\end{definition}

The homomorphism of original homomorphic evaluation algorithm in \exref{ex:dyn_elgamal} holds only for two ciphertexts of the same time.
In contrast, the modified algorithm can satisfy the homomorphism with two ciphertexts of different times.

\begin{theorem}
    Let $k \in \N$.
    The encryption scheme in \exref{ex:dyn_elgamal} with the modified algorithms in \defref{def:modification} satisfies
    \[
        \overline{\Dec}(\sk_t, \sk_{t+k}, \overline{\Eval}(\pk_t, \Enc(\pk_t, m_1), \Enc(\pk_{t+k}, m_2))) = m_1 m_2 \bmod p
    \]
    for any $(\pk_0, \sk_0) \gets \KeyGen(1^\lambda)$, for any $m_1, m_2 \in \M$, and for all $t \in \Z^+$, where $(\pk_{t+1}, \sk_{t+1}, \sigma_t) \gets \KeyUpd(\pk_t, \sk_t)$.
\end{theorem}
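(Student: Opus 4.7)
The plan is to unfold all the algorithm definitions and verify the identity by a direct algebraic computation modulo $p$. The key observation driving the modification is that the original $\Eval$ collapses the two $c_{i1}$ components into a single product $c_{11}c_{21}$, which forces both factors of the resulting mask to be stripped off by a single secret key; $\overline{\Eval}$ instead keeps $c_{11}$ and $c_{21}$ as separate components of $\overline{\ct}$, so that two different secret keys can be applied sequentially during $\overline{\Dec}$.

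First I would introduce notation: let $(\pk_t, \sk_t)$ and $(\pk_{t+k}, \sk_{t+k})$ be the key pairs, and write $\pk_t = (p, q, g, h_t)$ with $h_t = g^{s_t} \bmod p$ where $s_t = \sk_t$, and similarly $h_{t+k} = g^{s_{t+k}} \bmod p$. By \exref{ex:elgamal} and \exref{ex:dyn_elgamal}, there exist $r_1, r_2 \in \Z_q$ such that
\[
    \ct_1 = \Enc(\pk_t, m_1) = (g^{r_1},\, m_1 h_t^{r_1}) \bmod p, \qquad
    \ct_2 = \Enc(\pk_{t+k}, m_2) = (g^{r_2},\, m_2 h_{t+k}^{r_2}) \bmod p.
\]

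Next I would apply $\overline{\Eval}$. The inner call $\Eval(\pk_t, \ct_1, \ct_2)$ multiplies components to produce $\ct = (g^{r_1+r_2},\, m_1 m_2 h_t^{r_1} h_{t+k}^{r_2}) \bmod p$, after which $\overline{\Eval}$ repackages the output as
\[
    \overline{\ct} = (c_{11}, c_{21}, c_2) = \bigl(g^{r_1},\, g^{r_2},\, m_1 m_2 h_t^{r_1} h_{t+k}^{r_2}\bigr) \bmod p.
\]
The crucial point is that the two randomizers $g^{r_1}$ and $g^{r_2}$ survive separately, each paired with the mask that was produced under its own key.

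Then I would run $\overline{\Dec}(\sk_t, \sk_{t+k}, \overline{\ct})$ in two stages following \defref{def:modification}. The first stage computes $\tilde{c} = \Dec(\sk_{t+k}, (c_2, c_3))$; using $h_{t+k}^{r_2} = g^{r_2 s_{t+k}} \bmod p$, this yields
\[
    \tilde{c} = g^{-r_2 s_{t+k}} \cdot m_1 m_2 h_t^{r_1} h_{t+k}^{r_2} = m_1 m_2 h_t^{r_1} \bmod p,
\]
so the $h_{t+k}$-mask is cleanly removed. The second stage computes $\Dec(\sk_t, (c_1, \tilde{c})) = g^{-r_1 s_t} m_1 m_2 h_t^{r_1} = m_1 m_2 \bmod p$, which is the claimed value. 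The only things I need to invoke are the correctness of ElGamal decryption and the identity $h_\star^r g^{-r s_\star} \equiv 1 \pmod{p}$, both of which are immediate from \exref{ex:elgamal}. There is no genuine obstacle here; the work is bookkeeping, and the essential content is already encoded in the design of $\overline{\Eval}$ that preserves $c_{11}$ and $c_{21}$ as distinct coordinates so that the two secret keys can act on the correct randomizers.
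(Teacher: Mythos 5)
Your proposal is correct and follows essentially the same route as the paper's own proof: unfold the ElGamal encryptions under the two keys, observe that $\overline{\Eval}$ preserves the two randomizers $g^{r_1}$ and $g^{r_2}$ as separate components alongside the combined mask $m_1 m_2 g^{s_t r_1 + s_{t+k} r_2}$, and then peel off the two masks sequentially with $\sk_{t+k}$ and $\sk_t$. The only differences are notational (the paper writes $g^{sr}$ where you write $h_t^{r_1}$), so there is nothing substantive to add.
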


\begin{proof}
    Let $\sk_t = s$, $\sk_{t+k} = s'$, $\pk_t = (p, q, g, g^s \bmod p)$, and $\pk_{t+k} = (p, q, g, g^{s'} \bmod p)$.
    Then, $\overline{\Eval}(\pk_t, \Enc(\pk_t, m_1), \Enc(\pk_{t+k}, m_2)) = (c_1, c_2, c_3) = (g^r \bmod p, g^{r'} \bmod p, m_1 m_2 g^{sr+s'r'} \bmod p)$, where $r$ and $r'$ are random numbers corresponding to times $t$ and $t + k$, respectively.
    The intermediate output $\tilde{c}$ is obtained as $\tilde{c} = \Dec(\sk_{t+k}, (c_2, c_3)) = g^{-s'r'} m_1 m_2 g^{sr+s'r'} = m_1 m_2 g^{sr} \bmod p$.
    Therefore, $\overline{\Dec}(\sk_t, \sk_{t+k}, (c_1, c_2, c_3)) = \Dec(\sk_t, (c_1, \tilde{c})) = g^{-sr} m_1 m_2 g^{sr} = m_1 m_2 \bmod p$.
\end{proof}

With the algorithms in \exref{ex:dyn_elgamal} and \defref{def:modification}, the encrypted control algorithm in \defref{def:ec} of a linear controller $(\Phi, \xi_t) \mapsto \psi_t = \Phi \xi_t$ can be implemented as
\begin{equation}
    \EC(\pk_0, \ct_{\Phi,0}, \ct_{\xi,t}) =
    \begin{bmatrix}
        \overline{\Eval}(\pk_0, \ct_{\Phi_{11},0}, \ct_{\xi_1,t})      & \cdots & \overline{\Eval}(\pk_0, \ct_{\Phi_{1\beta},0}, \ct_{\xi_\beta,t}) \\
        \vdots                                                         & \ddots & \vdots                                                            \\
        \overline{\Eval}(\pk_0, \ct_{\Phi_{\alpha1},0}, \ct_{\xi_1,t}) & \cdots & \overline{\Eval}(\pk_0, \ct_{\Phi_{\alpha\beta},0}, \ct_{\xi_\beta,t})
    \end{bmatrix},
    \label{eq:ec}
\end{equation}
where the decryption algorithm in \defref{def:ec} is given as $\Sum \circ \overline{\Dec}$, and $\Sum: \M^{m \times n} \to \M^m : M \mapsto [\, \sum_{i=1}^n M_{1i} \ \allowbreak \cdots \ \allowbreak \sum_{i=1}^n M_{mi} \,]^\top$~\cite{kogiso2015}.
\figref{fig:ecs} shows the encrypted control system using the modified updatable homomorphic encryption that operates without transmitting an update token $\sigma_t$ from the system to the controller server.
Note that an encoder $\Ecd$ and a decoder $\Dcd$ are omitted in the figure for simplicity.
The controller server receives $\ct_{\xi,t} \gets \Enc(\pk_t, \Ecd(\xi_t; \Delta))$ at every time and returns $\overline{\ct}_{\psi,t} \gets \EC(\pk_0, \ct_{\Phi,0}, \ct_{\xi,t})$, where $\ct_{\Phi,0} \gets \Enc(\pk_0, \Ecd(\Phi; \Delta))$, while public and secret keys are updated by $(\pk_{t+1}, \sk_{t+1}, \sigma_t) \gets \KeyUpd(\pk_t, \sk_t)$.
The system recovers a controller output as $\psi_t \gets \Dcd(\Sum(\overline{\Dec}(\sk_0, \sk_{t}, \overline{\ct}_{\psi,t})); \Delta)$.
Consequently, the modification in \defref{def:modification} is beneficial for achieving the impossibility against not only an eavesdropper but also a malicious server.

\begin{figure}[t]
    \centering
    \includegraphics[scale=1]{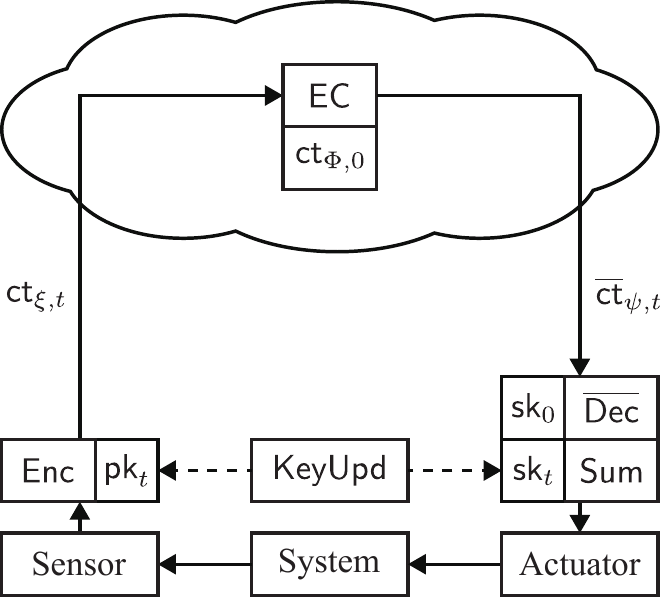}
    \caption{Encrypted control system with the modified updatable homomorphic encryption.}
    \label{fig:ecs}
\end{figure}

\section{Security Parameter Design}
\label{sec:security_parameter_design}

This section proposes a design method for a security parameter of the modified updatable homomorphic encryption that consists of the algorithms in \exref{ex:dyn_elgamal} and \defref{def:modification}.
To this end, we propose a novel sample identifying complexity of \eqref{eq:system} with the encrypted controller \eqref{eq:ec} under the adversary in \defref{def:adversary}.
Using the sample identifying complexity, we design the minimum security parameter that makes the encrypted control system secure against the adversary.

A sample identifying complexity and a sample deciphering time are crucial for defining the security of encrypted control systems in \defref{def:security}.
The sample deciphering time in \defref{def:sdt} can be computed without assuming a used encryption scheme.
In contrast, a computation method for a sample identifying complexity is not obvious because it depends on system dynamics and a system identification method.
This study proposes a sample identifying complexity of \eqref{eq:system} under the adversary in \defref{def:adversary} when the estimation error of least squares identification method is defined as follows.

\begin{definition}
\label{def:error}
    The estimation error $\epsilon$ of \eqref{eq:estimate} is defined as
    \[
        \epsilon(N) = \cfrac{1}{c} \left\|
        \begin{bmatrix}
            A & B
        \end{bmatrix} -
        \begin{bmatrix}
            \hat{A} & \hat{B}
        \end{bmatrix}
        \right\|_F^2,
    \]
    where $c=n(n+m)$ is the number of entries of $A$ and $B$.
\end{definition}

By \defref{def:error}, $\epsilon$ is a mean square error of the estimates $\hat{A}$ and $\hat{B}$.
It should be noted here that one of the best strategies for the adversary in \defref{def:adversary} to design the malicious inputs $a_{t_s}, \dots, a_{t_f}$ minimizing the error $\epsilon$ is that the inputs are independently and identically sampled from the Gaussian distribution with mean zero.
Under this setting, the following theorem reveals a sample identifying complexity.

\begin{theorem}
\label{thm:sic}
    Suppose malicious inputs $a_{t_s}, \dots, a_{t_f}$ are i.i.d. signals following the Gaussian distribution with mean $\bm{0}$ and variance $\sigma_u^2I$.
    The function
    \begin{equation}
        \gamma(N) = \cfrac{(m + n) \sigma_w^2}{\displaystyle \sigma_x^2 \tr(\Psi_w) + (N - 1) \left[ \sigma_u^2 (\tr(\Psi_u) + m) + \sigma_w^2 \tr(\Psi_w) \right]}
        \label{eq:sic}
    \end{equation}
    is the sample identifying complexity of \eqref{eq:system} under the adversary in \defref{def:adversary}, where $\Psi_u$ and $\Psi_w$ are controllability Gramians obtained by solving the discrete Lyapunov equations, $A \Psi_u A^\top - \Psi_u + B B^\top = 0$ and $A \Psi_w A^\top - \Psi_w + I = 0$, respectively.
\end{theorem}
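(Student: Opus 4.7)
The plan is to start from the closed-form expression for the least-squares error $[\hat A\ \hat B]-[A\ B]=W_p Z^+$ with $Z=[X_p^\top\ U_p^\top]^\top$, reduce the expected squared Frobenius norm to a quantity involving $\mathrm{tr}((ZZ^\top)^{-1})$, and then obtain the stated lower bound through two convex-inequality steps together with a Lyapunov-based upper bound on $\mathbb{E}[\mathrm{tr}(ZZ^\top)]$ expressed via the controllability Gramians.

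Concretely, substituting \eqref{eq:data} into \eqref{eq:estimate} and using that $Z$ has full row rank almost surely whenever $N-1\ge n+m$ under the Gaussian excitation, one gets
\[
[\hat A\ \hat B]-[A\ B] \;=\; W_p Z^\top(ZZ^\top)^{-1}, \qquad c\,\epsilon(N) \;=\; \mathrm{tr}\!\bigl((ZZ^\top)^{-1}Z W_p^\top W_p Z^\top(ZZ^\top)^{-1}\bigr).
\]
The martingale structure of the setup---namely that $w_{t_s+k}$ is independent of the regressor column $[x_{t_s+k}^\top\ u_{t_s+k}^\top]^\top$, since $x_{t_s+k}$ is measurable with respect to $\sigma(w_0,\dots,w_{t_s+k-1})$ and $u_{t_s+k}=a_{t_s+k}$ is exogenous---allows one to evaluate the expectation as $\mathbb{E}[c\,\epsilon(N)]=n\sigma_w^2\,\mathbb{E}[\mathrm{tr}((ZZ^\top)^{-1})]$.

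For the lower bound, I would chain the AM--HM inequality $\mathrm{tr}(M^{-1})\ge k^2/\mathrm{tr}(M)$ applied to $M=ZZ^\top\succ 0$ with $k=n+m$, followed by Jensen's inequality for the convex map $t\mapsto 1/t$, yielding $\mathbb{E}[\epsilon(N)]\ge (n+m)\sigma_w^2/\mathbb{E}[\mathrm{tr}(ZZ^\top)]$. It then suffices to upper-bound $\mathbb{E}[\mathrm{tr}(ZZ^\top)]=\sum_{k=0}^{N-2}\mathbb{E}\|x_{t_s+k}\|_2^2+\sum_{k=0}^{N-2}\mathbb{E}\|u_{t_s+k}\|_2^2$. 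The input sum equals $(N-1)m\sigma_u^2$ exactly. For the state sum, iterating \eqref{eq:system} under the mutual independence of $\{a_t\}$ and $\{w_t\}$ gives $\mathbb{E}[x_{t_s+k}x_{t_s+k}^\top]=A^k\Sigma_{t_s}(A^\top)^k+\sigma_u^2 P_k+\sigma_w^2 Q_k$, where $P_k=\sum_{j=0}^{k-1}A^j BB^\top(A^\top)^j\preceq\Psi_u$ and $Q_k=\sum_{j=0}^{k-1}A^j(A^\top)^j\preceq\Psi_w$. With $\Sigma_{t_s}=\sigma_x^2 I$, stability of $A$ together with the Lyapunov identity $\sum_{k\ge 0}A^k(A^\top)^k=\Psi_w$ bound the initial-condition contribution by $\sigma_x^2\,\mathrm{tr}(\Psi_w)$, and summing over $k=0,\dots,N-2$ recovers the denominator of \eqref{eq:sic}.

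The hard part, I expect, is the conditional-expectation step producing $n\sigma_w^2\,\mathbb{E}[\mathrm{tr}((ZZ^\top)^{-1})]$ from $\mathbb{E}[\mathrm{tr}((ZZ^\top)^{-1}Z W_p^\top W_p Z^\top(ZZ^\top)^{-1})]$. Because the $X_p$ block of $Z$ depends on the very noises that appear in $W_p$, the random matrices $W_p$ and $Z$ cannot be treated as independent, and a naive factorization of the expectation is unjustified. A filtration-based argument along the $\sigma$-algebras generated by $(w_j)_{j<t_s+k}$, combined with the cyclic-trace identity and the fact that noise cross-terms at distinct times have zero conditional mean, should yield either the equality above or a one-sided inequality in the direction needed for the lower bound on $\mathbb{E}[\epsilon(N)]$.
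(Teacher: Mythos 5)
Your proposal follows essentially the same route as the paper's proof: the same reduction of $\EV[\epsilon(N)]$ to $\frac{\sigma_w^2}{m+n}\tr\left(\EV\left[(DD^\top)^{-1}\right]\right)$ with $D=[X_p^\top\ U_p^\top]^\top$, the same chaining of the eigenvalue inequality $\tr(M^{-1})\ge (n+m)^2/\tr(M)$ with Jensen's inequality for $t\mapsto 1/t$, and the same Gramian bounds $\sum_k A^kBB^\top(A^k)^\top\le\Psi_u$, $\sum_k A^k(A^k)^\top\le\Psi_w$ on $\EV[\tr(DD^\top)]$, including the exact $(N-1)m\sigma_u^2$ input term. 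The one step you flag as unresolved---decoupling $W_p$ from the regressors---is precisely where the paper simply factorizes $\EV\left[\tr\left((D^+(D^+)^\top\otimes I)\vec(W_p)\vec(W_p)^\top\right)\right]$ into a product of expectations despite $X_p$ depending on $w_{t_s},\dots,w_{t_f-2}$, so your filtration-based reservation is well founded, but the paper offers no more rigorous treatment than the naive factorization you hesitate to accept.
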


\begin{proof}
    Let $D = [X_p^\top \ U_p^\top]^\top$.
    It follows from \eqref{eq:data} and \eqref{eq:estimate} that
    \begin{align*}
            \EV[\epsilon(N)]
        &=  \cfrac{1}{c} \EV\left[ \left\|
            \begin{bmatrix}
                A & B
            \end{bmatrix}
            - X_f D^+ \right\|_F^2 \right], \\
        &=  \cfrac{1}{c} \EV\left[ \left\|
            \begin{bmatrix}
                A & B
            \end{bmatrix}
            - \left(
            \begin{bmatrix}
                A & B
            \end{bmatrix}
            D + W_p \right) D^+ \right\|_F^2 \right], \\
        &=  \cfrac{1}{c} \EV\left[ \left\| W_p D^+ \right\|_F^2 \right], \\
        &=  \cfrac{1}{c} \EV\left[ \left\| \vec(W_p D^+) \right\|_2^2 \right], \\
        &=  \cfrac{1}{c} \EV\left[ \tr\left( \vec(W_p D^+) \vec(W_p D^+)^\top \right) \right], \\
        &=  \cfrac{1}{c} \EV\left[ \tr\left( ( D^+ \otimes I )^\top \vec(W_p)  \vec(W_p)^\top ( D^+ \otimes I ) \right) \right], \\
        &=  \cfrac{1}{c} \EV\left[ \tr\left( \left( D^+ (D^+)^\top \otimes I \right) \vec(W_p) \vec(W_p)^\top \right) \right], \\
        &=  \cfrac{1}{c} \tr\left( \EV\left[ D^+ (D^+)^\top \right] \EV\left[
            \begin{bmatrix}
                w_{t_s}^\top w_{t_s} &        & \\
                                     & \ddots & \\
                                     &        & w_{t_f-1}^\top w_{t_f-1}
            \end{bmatrix}
            \right] \right), \\
        &=  \cfrac{\sigma_w^2}{m + n} \tr\left( \EV\left[ D^\top ( D D^\top )^{-1} \left( D^\top ( D D^\top )^{-1} \right)^\top \right] \right), \\
        &=  \cfrac{\sigma_w^2}{m + n} \tr\left( \EV\left[ ( D D^\top )^{-1} \right] \right),
    \end{align*}
    where $\otimes$ is the Kronecker product.
    Using Jensen's inequality, the expectation of trace of inverse matrix is bounded from below by
    \begin{align*}
                \tr\left( \EV\left[ (D D^\top)^{-1} \right] \right) 
        &\ge    (m + n)^2 \EV\left[ \tr\left( D D^\top \right)^{-1} \right], \\
        &\ge    (m + n)^2 \EV\left[ \tr\left( D D^\top \right) \right]^{-1}, \\
        &=      (m + n)^2 \EV\left[ \tr\left(
                \begin{bmatrix}
                    X_p \\
                    U_p
                \end{bmatrix}
                \begin{bmatrix}
                    X_p^\top & U_p^\top
                \end{bmatrix}
                \right) \right]^{-1}, \\
        &=      (m + n)^2 \EV\left[ \tr\left(
                \begin{bmatrix}
                    X_p X_p^\top & X_p U_p^\top \\
                    U_p X_p^\top & U_p U_p^\top
                \end{bmatrix}
                \right) \right]^{-1}, \\
        &=      (m + n)^2 \left( \EV\left[ \tr\left( X_p X_p^\top \right) \right] + \EV\left[ \tr\left( U_p U_p^\top \right) \right] \right)^{-1}, \\
        &=      (m + n)^2 \left( \EV\left[ \tr\left( \sum_{t=t_s}^{t_f-1} x_t x_t^\top \right) \right] + \EV\left[ \tr\left( \sum_{t=t_s}^{t_f-1} u_t u_t^\top \right) \right] \right)^{-1}.
    \end{align*}
    It follows from \eqref{eq:system} that
    \[
        x_t = A^t x_0 + \sum_{k=0}^{t-1} A^{t-1-k} B u_k + \sum_{k=0}^{t-1} A^{t-1-k} w_k.
    \]
    Thus, the expectations of traces are given as
    \begin{align*}
                \EV\left[ \tr\left( \sum_{t=t_s}^{t_f-1} x_t x_t^\top \right) \right]
        &=      \EV\left[ \tr\left( \sum_{t=t_s}^{t_f-1} A^t x_0 x_0^\top (A^t)^\top \right) \right] \\
        &\quad+ \EV\left[ \tr\left( \sum_{t=t_s}^{t_f-1} \sum_{k=0}^{t-1} A^{t-1-k} B u_k u_k^\top B^\top (A^{t-1-k})^\top \right) \right] \\
        &\quad+ \EV\left[ \tr\left( \sum_{t=t_s}^{t_f-1} \sum_{k=0}^{t-1} A^{t-1-k} w_k w_k^\top (A^{t-1-k})^\top \right) \right], \\
        &=      \sigma_x^2 \tr\left( \sum_{t=t_s}^{t_f-1} A^t (A^t)^\top \right) + \sigma_u^2 \tr\left( \sum_{t=t_s}^{t_f-1} \sum_{k=0}^{t-1} A^k B B^\top (A^k)^\top \right) \\
        &\quad+ \sigma_w^2 \tr\left( \sum_{t=t_s}^{t_f-1} \sum_{k=0}^{t-1} A^k (A^k)^\top \right)
    \end{align*}
    and
    \[
        \EV\left[ \tr\left( \sum_{t=t_s}^{t_f-1} u_t u_t^\top \right) \right] = (N - 1) m \sigma_u^2.
    \]
    Furthermore, the matrices are bounded by
    \begin{align*}
        &\sum_{t=t_s}^{t_f-1} A^t (A^t)^\top \le \sum_{t=0}^\infty A^t (A^t)^\top = \Psi_w, \\
        &\sum_{k=0}^{t-1} A^k (A^k)^\top \le \sum_{k=0}^\infty A^k (A^k)^\top = \Psi_w, \\
        &\sum_{k=0}^{t-1} A^k B B^\top (A^k)^\top \le \sum_{k=0}^\infty A^k B B^\top (A^k)^\top = \Psi_u.
    \end{align*}
    Therefore, we obtain
    \begin{align*}
                \EV[\epsilon(N)]
        &\ge    \cfrac{\sigma_w^2}{m + n} \cdot \cfrac{(m + n)^2}{\sigma_x^2 \tr(\Psi_w) + (N - 1) \sigma_u^2 \tr(\Psi_u) + (N - 1) \sigma_w^2 \tr(\Psi_w) + (N - 1) m \sigma_u^2}, \\
        &=      \cfrac{(m + n) \sigma_w^2}{\displaystyle \sigma_x^2 \tr(\Psi_w) + (N - 1) \left[ \sigma_u^2 (\tr(\Psi_u) + m) + \sigma_w^2 \tr(\Psi_w) \right]} = \gamma(N).
    \end{align*}
    By \defref{def:sic}, $\gamma(N)$ is the sample identifying complexity of \eqref{eq:system} under the adversary in \defref{def:adversary}
\end{proof}

If a sample size is sufficiently large, the sample identifying complexity \eqref{eq:sic} is given as a simple equation.

\begin{corollary}
    Let $R_\sigma = \sigma_u^2 / \sigma_w^2$.
    Suppose a sample size $N$ is sufficiently large.
    Then, the function
    \begin{equation}
        \gamma(N) = \cfrac{m + n}{\displaystyle (N - 1) \left[ R_\sigma (\tr(\Psi_u) + m) + \tr(\Psi_w) \right]}
        \label{eq:sic2}
    \end{equation}
    is the sample identifying complexity of \eqref{eq:system} under the adversary in \defref{def:adversary}.
\end{corollary}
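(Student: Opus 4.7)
The plan is to derive \eqref{eq:sic2} as the asymptotic simplification of the sample identifying complexity \eqref{eq:sic} established in \thmref{thm:sic}. First, I would examine the denominator of \eqref{eq:sic} and separate its two structural parts: a constant term $\sigma_x^2\tr(\Psi_w)$ reflecting the contribution of the initial state covariance, and a term $(N-1)\bigl[\sigma_u^2(\tr(\Psi_u)+m)+\sigma_w^2\tr(\Psi_w)\bigr]$ that grows linearly in the sample size $N$.

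Second, I would argue that for sufficiently large $N$ the constant contribution becomes negligible relative to the linearly growing contribution, since the ratio $\sigma_x^2\tr(\Psi_w)/\bigl((N-1)[\sigma_u^2(\tr(\Psi_u)+m)+\sigma_w^2\tr(\Psi_w)]\bigr)$ tends to $0$ as $N\to\infty$. Dropping this lower-order constant term from the denominator therefore changes the expression by an asymptotically negligible amount, and the resulting bound remains valid in the sense of \defref{def:sic} once $N$ is taken large enough that the incurred slack is absorbed.

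Third, I would factor $\sigma_w^2$ out of both numerator and denominator of the simplified expression. The numerator reduces to $m+n$, while in the bracket of the denominator the factor $\sigma_u^2$ becomes $\sigma_u^2/\sigma_w^2 = R_\sigma$, leaving the $\tr(\Psi_w)$ term unchanged. This yields precisely \eqref{eq:sic2}, and since it lower bounds $\EV[\epsilon(N)]$ asymptotically, it qualifies as a sample identifying complexity under \defref{def:sic}.

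The main obstacle, if any, is the mild technical point that removing a positive term from the denominator enlarges the fraction, so one must justify that the resulting expression still obeys the inequality $\gamma(N)\le\EV[\epsilon(N)]$ required by \defref{def:sic}. This is handled by the hypothesis that $N$ is sufficiently large: given any prescribed tolerance, one can choose $N$ so that the ratio of the dropped constant to the dominant linear term is below that tolerance, making the gap between \eqref{eq:sic} and \eqref{eq:sic2} arbitrarily small and preserving the lower-bound property in the asymptotic regime. No new probabilistic machinery is needed beyond what \thmref{thm:sic} has already supplied.
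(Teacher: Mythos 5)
Your proof takes essentially the same route as the paper's: for large $N$ the constant term $\sigma_x^2\tr(\Psi_w)$ in the denominator of \eqref{eq:sic} is dropped, and dividing numerator and denominator by $\sigma_w^2$ introduces $R_\sigma$ and yields \eqref{eq:sic2}. Your extra remark that deleting a positive term from the denominator enlarges the fraction (so the lower-bound property of \defref{def:sic} holds only approximately) is a subtlety the paper's two-line proof silently glosses over, and your resolution via the ``sufficiently large $N$'' hypothesis is the same one the paper implicitly relies on.
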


\begin{proof}
    If $N$ is sufficiently large, the denominator of \eqref{eq:sic} can be approximated by $(N - 1) \left[ \sigma_u^2 (\tr(\Psi_u) + m) + \sigma_w^2 \tr(\Psi_w) \right]$.
    Then, \eqref{eq:sic2} holds by dividing both the numerator and denominator of \eqref{eq:sic} by $\sigma_w^2$.
\end{proof}

The equation \eqref{eq:sic2} shows that the sample identifying complexity is characterized by the traces of controllability Gramians $\Psi_u, \Psi_w$ and variance ratio $R_\sigma$.
If $R_\sigma$ is small, i.e., $\sigma_u^2 \ll \sigma_w^2$, the sample identifying complexity can be approximated by
\[
    \gamma(N) \simeq \cfrac{m + n}{(N - 1) \tr(\Psi_w)},
\]
and system states are driven by almost only system noises.
In such a case, the smaller eigenvalues of $\Psi_w$ that represent the degree of effects from the noises to the states are, the larger sample identifying complexity is.
In contrast, if $R_\sigma$ is large, i.e., $\sigma_u^2 \gg \sigma_w^2$, the sample identifying complexity can be approximated by
\[
    \gamma(N) \simeq \cfrac{m + n}{(N - 1) R_\sigma (\tr(\Psi_u) + m)},
\]
and the states are driven by almost only system inputs rather than the noises.
The sample identifying complexity in this case increases as the trace of $\Psi_u$ decreases.

The observations suggest a defense policy that minimizes the eigenvalues of Gramians to reduce the information leakage of \eqref{eq:system} by maximizing the sample identifying complexity.
However, the defense policy seems to have a limitation.
An adversary may choose an input variance $\sigma_u^2$ sufficiently larger than a noise variance $\sigma_w^2$ for decreasing the estimation error.
Then, the sample identifying complexity converges to
\begin{equation}
    \bar{\gamma}(N) = \cfrac{m + n}{(N - 1) m R_\sigma}
    \label{eq:bound}
\end{equation}
as the trace of Gramian $\Psi_u$ goes to zero.
The equation \eqref{eq:bound} is the upperbound of sample identifying complexity when $R_\sigma$ is large.
Furthermore, reducing the trace of $\Psi_u$ implies that the energy of system inputs affecting system states is attenuated.
In other words, the controllability of \eqref{eq:system} should be worse for improving the sample identifying complexity.
This property is not desired in practice because it means that the system is difficult to control.
Note that, even when $\sigma_u^2$ is sufficiently smaller than $\sigma_w^2$, there is the upperbound
\[
    \bar{\gamma}(N) = \cfrac{m + n}{(N - 1) n}
\]
because $A \Psi_w A^\top - \Psi_w + I = 0$ holds only if $\tr(\Psi_w) > \tr(I) = n$ as long as $A$ is not a zero matrix.

The upperbounds motivate to increase a security parameter of a used encryption scheme for further improving the security.
Meanwhile, a large security parameter leads to a high computational burden.
This dilemma can be solved reasonably by obtaining the optimal security parameter designed as the minimum security parameter that guarantees the security of encrypted control system.
The security parameter design in this study follows the approach in~\cite{teranishi2023} using the sample identifying complexity \eqref{eq:sic2}.
The rest of this section describes the summary of this approach.
The sample deciphering time in \defref{def:sdt} is monotonically increasing on a sample size $N$.
Hence, by \defref{def:security}, an encrypted control system becomes secure if the sample deciphering time $\tau(N^\ast, \lambda)$ becomes larger than a defense period $\tau_c$, where $N^\ast$ is the minimum sample size such that the sample identifying complexity $\gamma(N^\ast)$ is smaller than an acceptable estimation error $\gamma_c$.
Consequently, we obtain the following theorem.

\begin{theorem}
\label{thm:opt_sec}
    Suppose a sample size $N$ is sufficiently large.
    The minimum security parameter $\lambda^\ast$ guarantees that the encrypted control system consisting of \eqref{eq:system} and \eqref{eq:ec} becomes secure, in the sense of \defref{def:security}, is
    \begin{equation}
        \lambda^\ast = \left\lfloor \log_2 \cfrac{\Upsilon \tau_c}{N^\ast} \right\rfloor + 1, \quad N^\ast = \left\lfloor \cfrac{m + n}{\gamma_c \left[ R_\sigma (\tr(\Psi_u) + m) + \tr(\Psi_w) \right]} \right\rfloor + 2,
        \label{eq:opt_sec}
    \end{equation}
    where $\Upsilon$ and $(\gamma_c, \tau_c)$ are defined in \defref{def:sdt} and \defref{def:security}, respectively.
\end{theorem}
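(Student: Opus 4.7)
The plan is to exploit the opposing monotonicities of $\gamma$ and $\tau$ in the sample size $N$ to reduce the universally quantified security condition of \defref{def:security} to a single scalar inequality, then to invert that inequality to isolate the minimum integer $\lambda$.

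First I would observe that the sample identifying complexity \eqref{eq:sic2} is strictly decreasing in $N$ (its denominator is linear and increasing in $N$, with a positive coefficient since $R_\sigma > 0$, $\tr(\Psi_u) \ge 0$, $\tr(\Psi_w) > 0$). Consequently, the sublevel set $\{N \in \N : \gamma(N) < \gamma_c\}$ is an upward-closed ray $\{N : N \ge N^\ast\}$, where $N^\ast$ is obtained by solving $\gamma(N) < \gamma_c$ for $N$. Isolating $N$ in
\[
    \frac{m+n}{(N-1)\left[R_\sigma(\tr(\Psi_u)+m) + \tr(\Psi_w)\right]} < \gamma_c
\]
yields $N - 1 > (m+n)/\bigl(\gamma_c[R_\sigma(\tr(\Psi_u)+m)+\tr(\Psi_w)]\bigr)$, and taking the least integer $N$ satisfying this strict inequality gives exactly the expression for $N^\ast$ in \eqref{eq:opt_sec}.

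Second, since $\tau(N,\lambda) = 2^\lambda N / \Upsilon$ is strictly increasing in $N$ for any fixed $\lambda$, the condition of \defref{def:security} becomes: for every $N \ge N^\ast$, $\tau(N,\lambda) > \tau_c$. Monotonicity in $N$ means this family of constraints collapses to the single tightest one at $N = N^\ast$, namely $2^\lambda N^\ast / \Upsilon > \tau_c$. Solving for $\lambda$ gives $\lambda > \log_2(\Upsilon \tau_c / N^\ast)$, and the minimum integer $\lambda$ satisfying this strict inequality is $\lfloor \log_2(\Upsilon \tau_c / N^\ast) \rfloor + 1$, matching \eqref{eq:opt_sec}. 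The hypothesis that $N$ is sufficiently large is used only to justify invoking \eqref{eq:sic2} (the asymptotic form of the sample identifying complexity) from the corollary, rather than the exact expression \eqref{eq:sic} of \thmref{thm:sic}.

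There is no deep obstacle here; the only care needed is in handling the strict inequalities and the integer-valued floor functions consistently, ensuring that $N^\ast$ is the smallest integer with $\gamma(N^\ast) < \gamma_c$ and $\lambda^\ast$ is the smallest integer with $\tau(N^\ast, \lambda^\ast) > \tau_c$, so that no $N$ simultaneously violates both clauses of \defref{def:security}. Writing the two strict inequalities as $N > (\cdot) + 1$ and $\lambda > \log_2(\cdot)$ and applying $\lfloor \cdot \rfloor + 1$ (with $+2$ for $N^\ast$ to account for the additive $+1$) produces the stated formulas.
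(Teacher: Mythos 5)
Your proposal is correct and takes essentially the same route as the paper: invert $\gamma(N)<\gamma_c$ using \eqref{eq:sic2} to get the minimal integer $N^\ast$, then invert $\tau(N^\ast,\lambda)>\tau_c$ to get the minimal integer $\lambda^\ast$. The only difference is that you make explicit the monotonicity argument (that $\tau$ increasing in $N$ reduces the security condition to the single constraint at $N=N^\ast$), which the paper states in the paragraph preceding the theorem rather than inside the proof.
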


\begin{proof}
    It follows from \eqref{eq:sic2} that
    \[
        \gamma(N) < \gamma_c \iff N > \cfrac{m + n}{\gamma_c \left[ R_\sigma (\tr(\Psi_u) + m) + \tr(\Psi_w) \right]} + 1.
    \]
    Hence, the minimum sample size $N^\ast$ such that $\gamma(N^\ast) < \gamma_c$ is given as
    \[
        N^\ast = \left\lfloor \cfrac{m + n}{\gamma_c \left[ R_\sigma (\tr(\Psi_u) + m) + \tr(\Psi_w) \right]} + 1 \right\rfloor + 1.
    \]
    Similarly, the minimum security parameter $\lambda^\ast$ such that $\tau(N^\ast, \lambda^\ast) > \tau_c$ is given as
    \[
        \lambda^\ast = \left\lfloor \log_2 \cfrac{\Upsilon \tau_c}{N^\ast} \right\rfloor + 1,
    \]
    where
    \[
        \tau(N^\ast, \lambda) > \tau_c \iff \lambda > \log_2 \cfrac{\Upsilon \tau_c}{N^\ast}.
    \]
    This completes the proof.
\end{proof}

Note that the minimum key length $k^\ast$ of an encryption scheme that satisfies $\lambda^\ast$ bit security can be computed as
\begin{equation}
    k^\ast = \argmin_{k\in\N} \Omega(k) \quad \text{s.t.} \quad \Omega(k) \ge 2^{\lambda^\ast},
    \label{eq:opt_key}
\end{equation}
where $\lambda^\ast$ is given by \eqref{eq:opt_sec}, and $\Omega(k)$ is the time complexity of fastest known algorithm for breaking the encryption scheme.

\section{Numerical Simulation}
\label{sec:simulation}

This section presents the results of numerical simulations.
We set $m = n = 4$ and $\sigma_x^2 = 1$ throughout the simulations.

Consider the system \eqref{eq:system} whose controllability Gramians are $\Psi_w = \Psi_u = 2I$, where the corresponding system parameters are $A = 0.7071I$ and $B = I$.
\figref{fig:variance_ratio} shows the estimation errors and sample identifying complexities with the nine combinations of $\sigma_w^2=0.1, 1, 10$ and $\sigma_u^2=0.1, 1, 10$.
The gray dots are the estimation errors in \defref{def:error}.
The blue solid and orange dashed lines are the expectations of estimation errors and the sample identifying complexities \eqref{eq:sic2}, respectively.
Here, the system identification is performed $50$ times for each sample size with different data sets based on the dynamics of \eqref{eq:system} with the system parameters.
The estimation errors and their expectations in the figure are smaller as the variance ratio increases, and the proposed complexities capture the behavior of expectations in all the cases.
Moreover, the sample identifying complexity with the larger variance ratio is less conservative.
Hence, our proposed complexity becomes more practical as an adversary attempts to estimate system parameters more accurately.

\begin{figure}[t]
    \centering
    \subfloat[$R_\sigma = 1$, $\sigma_w^2 = 0.1$, $\sigma_u^2 = 0.1$.]{\includegraphics[scale=.6]{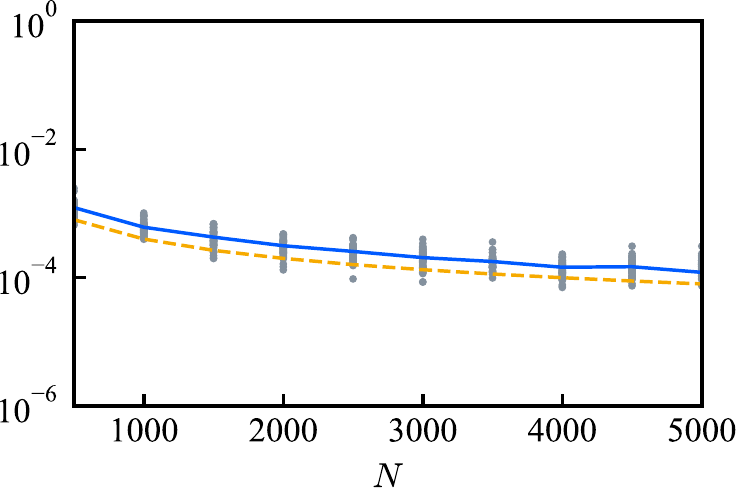}} \quad
    \subfloat[$R_\sigma = 10$, $\sigma_w^2 = 0.1$, $\sigma_u^2 = 1$.]{\includegraphics[scale=.6]{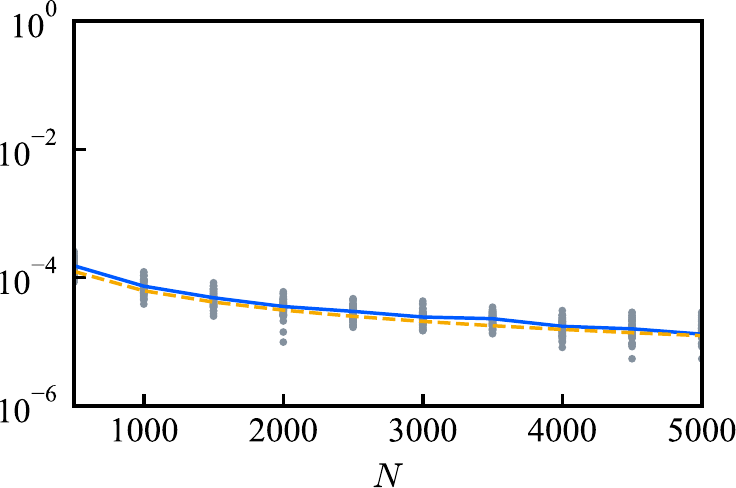}} \quad
    \subfloat[$R_\sigma = 100$, $\sigma_w^2 = 0.1$, $\sigma_u^2 = 10$.]{\includegraphics[scale=.6]{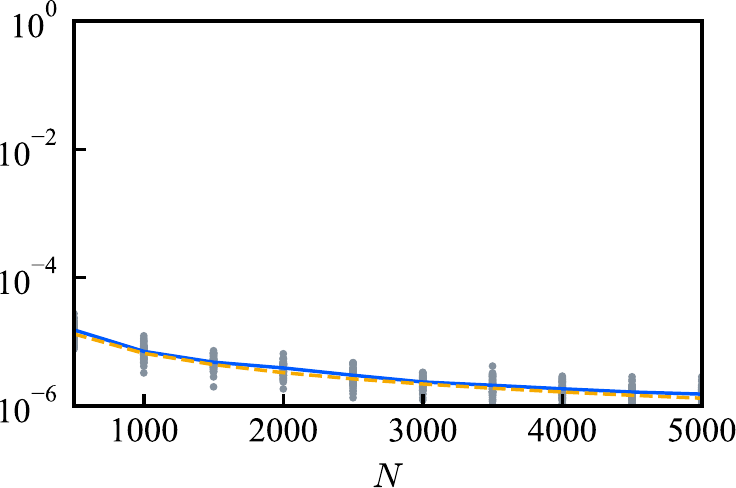}} \\
    \subfloat[$R_\sigma = 0.1$, $\sigma_w^2 = 1$, $\sigma_u^2 = 0.1$.]{\includegraphics[scale=.6]{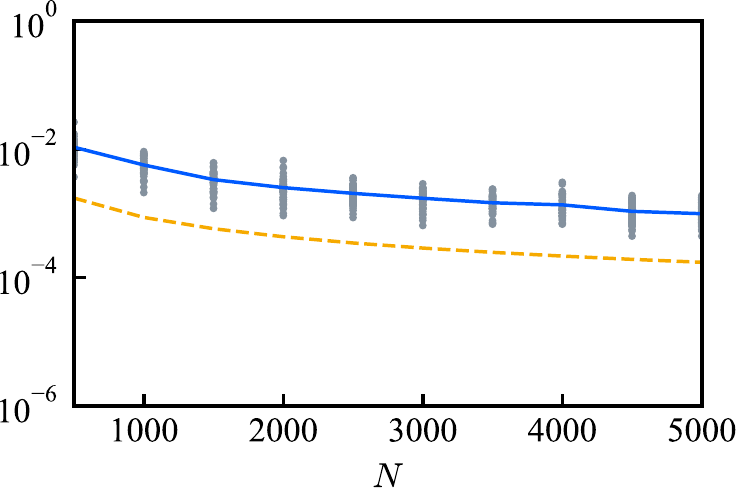}} \quad
    \subfloat[$R_\sigma = 1$, $\sigma_w^2 = 1$, $\sigma_u^2 = 1$.]{\includegraphics[scale=.6]{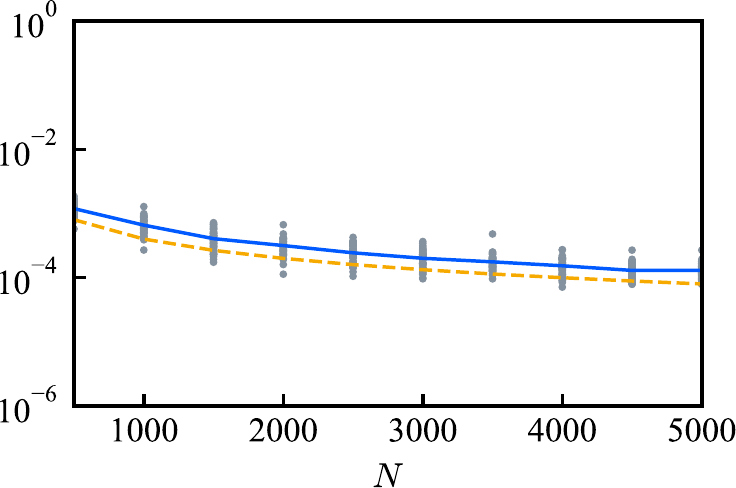}} \quad
    \subfloat[$R_\sigma = 10$, $\sigma_w^2 = 1$, $\sigma_u^2 = 10$.]{\includegraphics[scale=.6]{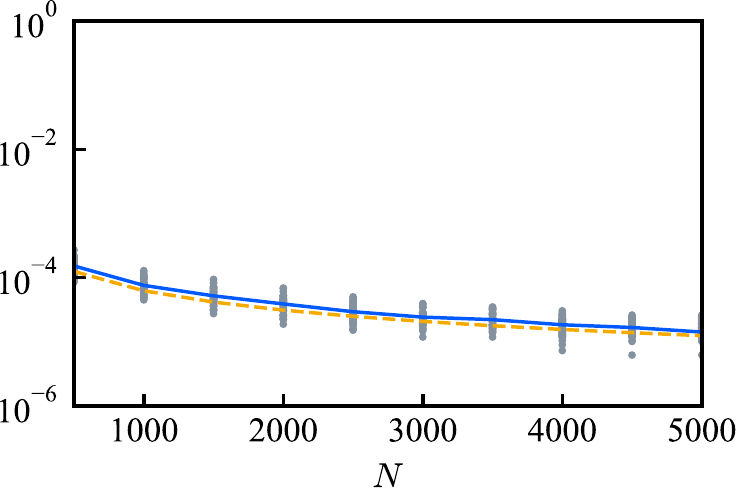}} \\
    \subfloat[$R_\sigma = 0.01$, $\sigma_w^2 = 10$, $\sigma_u^2 = 0.1$.]{\includegraphics[scale=.6]{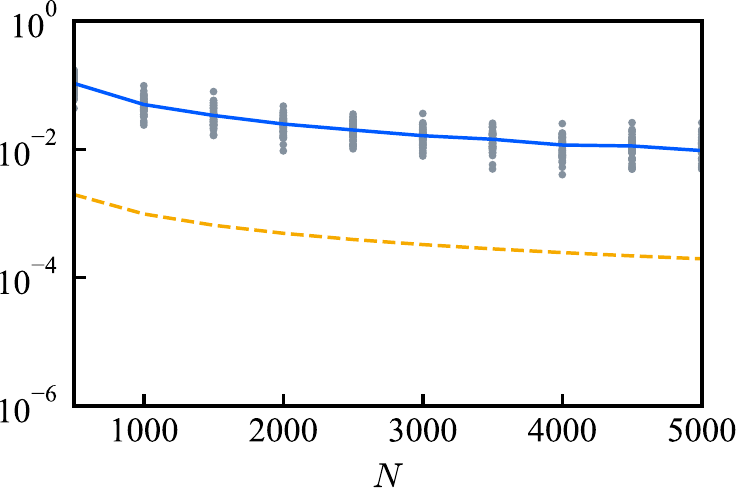}} \quad
    \subfloat[$R_\sigma = 0.1$, $\sigma_w^2 = 10$, $\sigma_u^2 = 1$.]{\includegraphics[scale=.6]{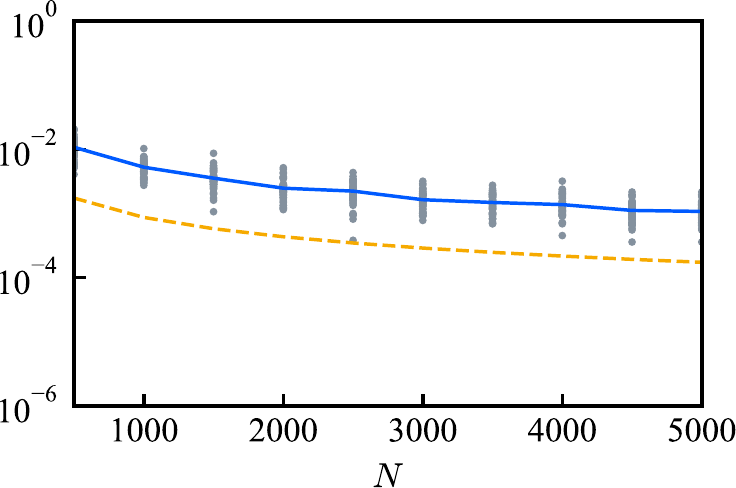}} \quad
    \subfloat[$R_\sigma = 1$, $\sigma_w^2 = 10$, $\sigma_u^2 = 10$.]{\includegraphics[scale=.6]{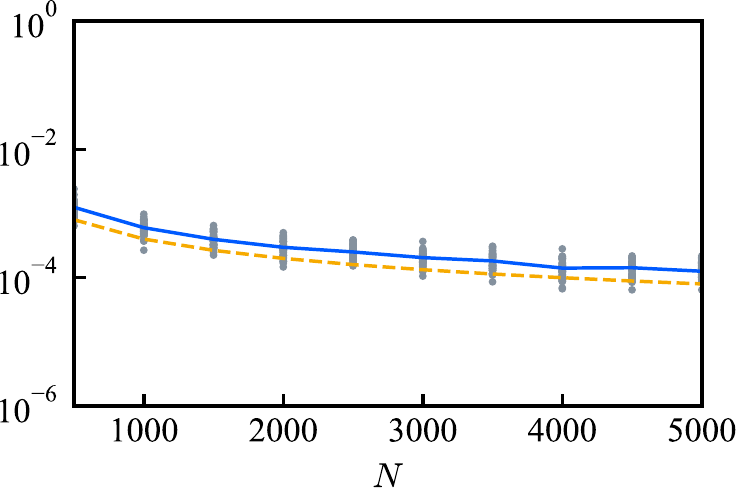}}
    \caption{Comparison between the expectation of estimation error and the sample identifying complexity.}
    \label{fig:variance_ratio}
\end{figure}

Next, we confirm changes in the expectation of estimation error and the sample identifying complexity when the controllability Gramian $\Psi_u$ is varied.
The other Gramian $\Psi_w$ and variance ratio $R_\sigma$ in this simulation are fixed to $2I$ and $100$, respectively.
\figref{fig:gramian} depicts the expectations and sample identifying complexities as with \figref{fig:variance_ratio}.
Additionally, the black dotted lines are the upperbound \eqref{eq:bound} of sample identifying complexities.
The sample identifying complexity in the figure converges to the upperbound as the trace of $\Psi_u$ decreases.
Accordingly, the expectation of estimation error increases, which helps the difficulty of system identification improve.

\begin{figure}[t]
    \centering
    \subfloat[$\Psi_u = 2I$.]{\includegraphics[scale=.6]{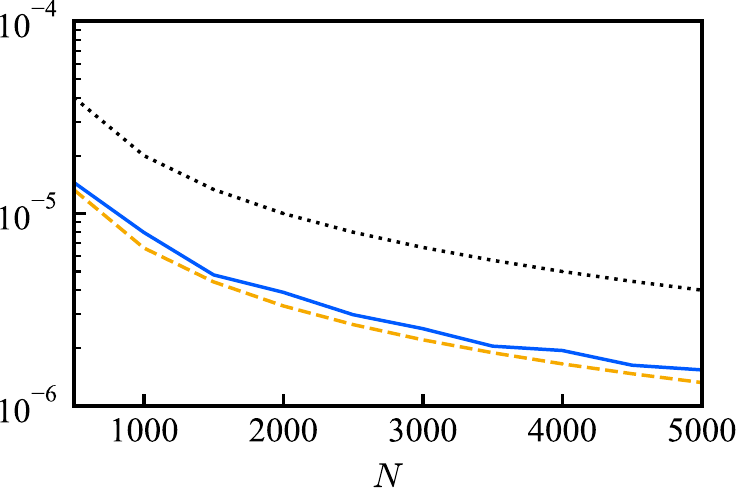}} \quad
    \subfloat[$\Psi_u = I$.]{\includegraphics[scale=.6]{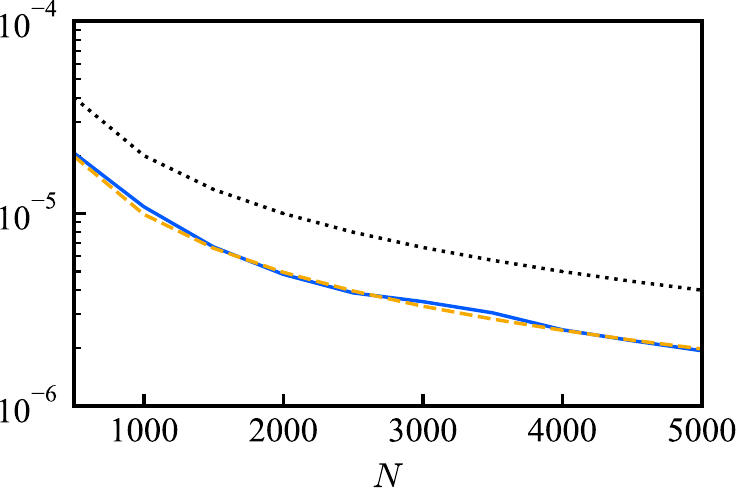}} \quad
    \subfloat[$\Psi_u = 0.5I$.]{\includegraphics[scale=.6]{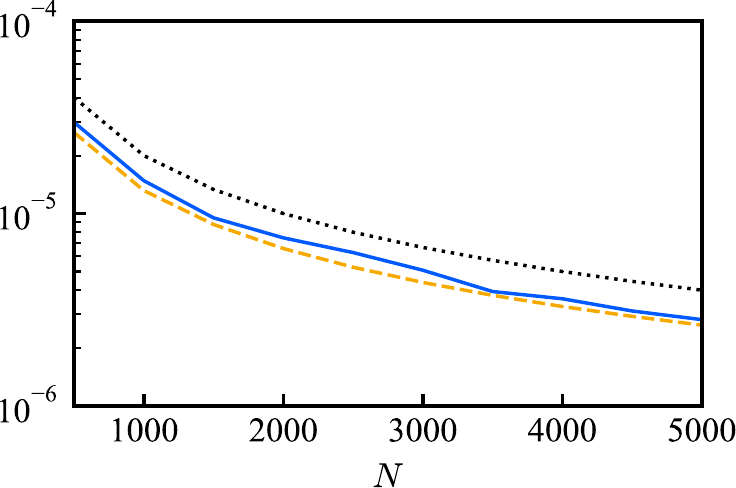}} \\
    \subfloat[$\Psi_u = 0.3I$.]{\includegraphics[scale=.6]{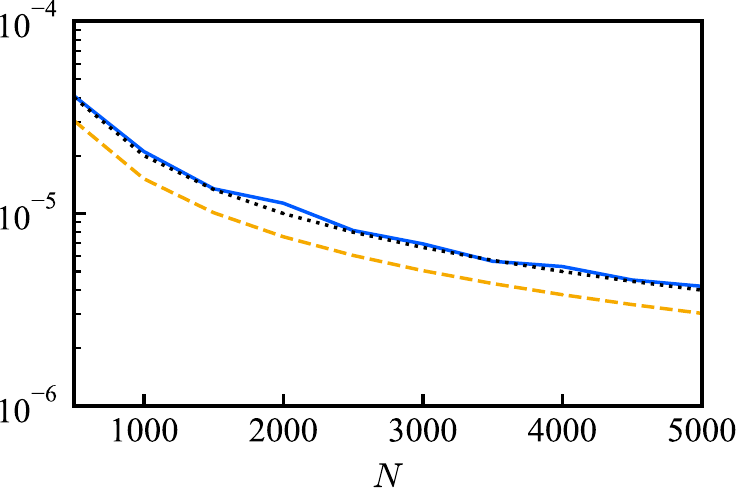}} \quad
    \subfloat[$\Psi_u = 0.2I$.]{\includegraphics[scale=.6]{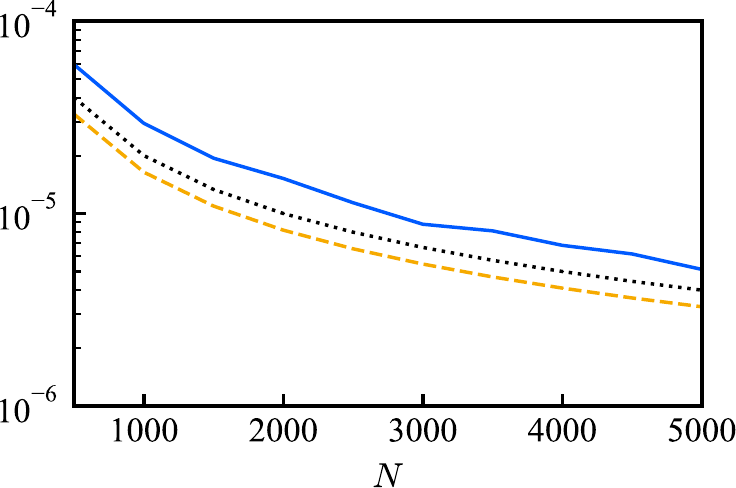}} \quad
    \subfloat[$\Psi_u = 0.1I$.]{\includegraphics[scale=.6]{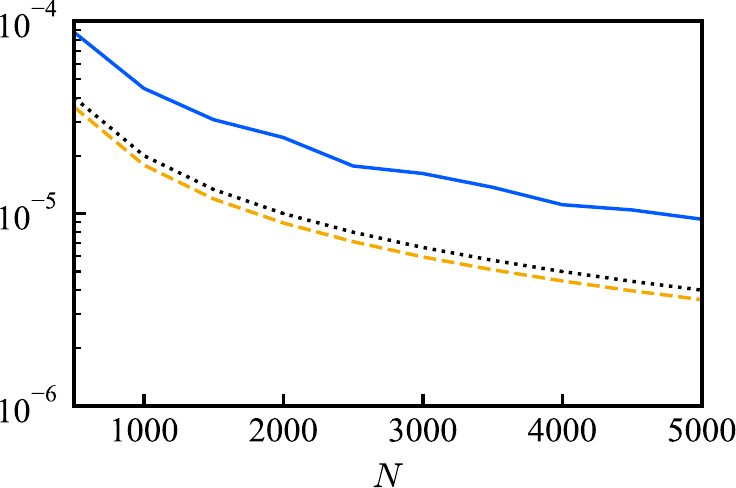}}
    \caption{Changes of the expectation of estimation error and the sample identifying complexity with the various controllability Gramians.}
    \label{fig:gramian}
\end{figure}

Finally, we demonstrate the optimal security parameter design.
Suppose the parameters are $\Psi_w = 2I$, $\Psi_u = 0.5I$, and $R_\sigma = 100$.
Choose the design parameters as $\gamma_c = 10^{-6}$, $\tau_c = 31536 \times 10^4$~s ($10$ years), and $\Upsilon = 442 \times 10^{15}$~FLOPS\footnote{Supercomputer Fugaku. See \url{https://www.top500.org/system/179807/}}.
Then, the minimum sample size $N^\ast$ and optimal sample size $\lambda^\ast$ in \eqref{eq:opt_sec} are given as $13159$ and $74$~bit, respectively.
Moreover, the minimum key length \eqref{eq:opt_key} of modified updatable homomorphic encryption with the algorithms in \exref{ex:dyn_elgamal} and \defref{def:modification}, that guarantees the security of encrypted control system consisting of \eqref{eq:system} and \eqref{eq:ec} in the sense of \defref{def:security}, can be computed as $k^\ast = 712$~bit, where the time complexity of fastest known algorithm for breaking the encryption scheme is $\Omega(k) = \exp\{(64/9)^{1/3} (\ln 2^k)^{1/3} (\ln\ln 2^k)^{2/3}\}$~\cite{Bernstein93}.

\section{Conclusion}
\label{sec:conclusion}

This study presented a modification of a conventional updatable homomorphic encryption scheme for improving the security of encrypted control systems against an eavesdropper and a malicious server.
The novel sample identifying complexity was also proposed under an adversary attempting to identify system parameters in an encrypted control system using a least squares method.
The proposed sample identifying complexity is characterized by controllability Gramians and a variance ratio between an identification input and a system noise.
Furthermore, using the sample identifying complexity, the optimal security parameter for encrypted control systems with the modified updatable homomorphic encryption was designed.
The effectiveness of the proposed method was demonstrated through numerical simulations.

Our future work includes extending the optimal security parameter design under other identification methods, such as subspace identification methods, and considering multi-agent and nonlinear systems.

\section*{Disclosure statement}

The authors report there are no competing interests to declare.

\section*{Funding}

This work was supported by JSPS Grant-in-Aid for JSPS Fellows Grant Number JP21J22442 and JSPS KAKENHI Grant Number JP22H01509.

\bibliographystyle{tfnlm}
\bibliography{encrypted_control_and_optimization,others}

\begin{thebibliography}{10}
\providecommand{\url}[1]{\normalfont{#1}}
\providecommand{\urlprefix}{Available from: }

\bibitem{acar2019}
Acar~A, Aksu~H, Uluagac~AS, et~al. A survey on homomorphic encryption schemes:
  {{Theory}} and implementation. ACM Computing Surveys.
  2019;\hspace{0pt}51(4):1--35.

\bibitem{kogiso2015}
Kogiso~K, Fujita~T. Cyber-security enhancement of networked control systems
  using homomorphic encryption. In: {IEEE} {Conference} on {Decision} and
  {Control}; 2015. p. 6836--6843.

\bibitem{farokhi2017}
Farokhi~F, Shames~I, Batterham~N. Secure and private control using
  semi-homomorphic encryption. Control Engineering Practice.
  2017;\hspace{0pt}67:13--20.

\bibitem{kim2016}
Kim~J, Lee~C, Shim~H, et~al. Encrypting controller using fully homomorphic
  encryption for security of cyber-physical systems. IFAC-PapersOnLine.
  2016;\hspace{0pt}49(22):175--180.

\bibitem{kim2022}
Kim~J, Kim~D, Song~Y, et~al. Comparison of encrypted control approaches and
  tutorial on dynamic systems using learning with errors-based homomorphic
  encryption. Annual Reviews in Control. 2022;\hspace{0pt}54:200--218.

\bibitem{darup2021}
Darup~MS, Alexandru~AB, Quevedo~DE, et~al. Encrypted control for networked
  systems – {An} illustrative introduction and current challenges. IEEE
  Control Systems Magazine. 2021;\hspace{0pt}41(3):58--78.

\bibitem{alexandru2018}
Alexandru~AB, Morari~M, Pappas~GJ. Cloud-based {MPC} with encrypted data. In:
  {IEEE} {Conference} on {Decision} and {Control}; 2018. p. 5014--5019.

\bibitem{darup2018a}
Darup~MS, Redder~A, Quevedo~DE. Encrypted cloud-based {MPC} for linear systems
  with input constraints. IFAC-PapersOnLine. 2018;\hspace{0pt}51(20):535--542.

\bibitem{qiu2019}
Qiu~Y, Ueda~J. Encrypted motion control of a teleoperation system with
  security-enhanced controller by deception. In: {ASME} {Dynamic} {System} and
  {Control} {Conference}; 2019.

\bibitem{shono2022}
Shono~N, Miyazaki~T, Teranishi~K, et~al. Implementation of encrypted control of
  pneumatic bilateral control system using wave variables. In: International
  {Symposium} on {Artificial} {Life} and {Robotics}, {International}
  {Symposium} on {BioComplexity}, {International} {Symposium} on {Swarm}
  {Behavior} and {Bio}-{Inspired} {Robotics}; 2022. p. 1169--1174.

\bibitem{suh2021a}
Suh~J, Tanaka~T. Encrypted value iteration and temporal difference learning
  over leveled homomorphic encryption. In: American {Control} {Conference};
  2021. p. 2555--2561.

\bibitem{teranishi2022}
Teranishi~K, Kogiso~K. Towards provably secure encrypted control using
  homomorphic encryption. In: {IEEE} {Conference} on {Decision} and {Control};
  2022. p. 7740--7745.

\bibitem{teranishi2022a}
Teranishi~K, Sadamoto~T, Chakrabortty~A, et~al. Designing optimal key lengths
  and control laws for encrypted control systems based on sample identifying
  complexity and deciphering time. IEEE Transactions on Automatic Control.
  2022;\hspace{0pt}Early access.

\bibitem{teranishi2023}
Teranishi~K, Kogiso~K. Optimal controller and security parameter for encrypted
  control systems under least squares identification ; 2023. ArXiv:2302.12154.

\bibitem{ElGamal85}
Elgamal~T. A public key cryptosystem and a signature scheme based on discrete
  logarithms. IEEE Transactions on Information Theory.
  1985;\hspace{0pt}31(4):469--472.

\bibitem{Katz21}
Katz~J, Lindell~Y. Introduction to modern cryptography. Boca Raton: CRC Press;
  2021.

\bibitem{cheon2018a}
Cheon~JH, Han~K, Hong~SM, et~al. Toward a secure drone system: {Flying} with
  real-time homomorphic authenticated encryption. IEEE Access.
  2018;\hspace{0pt}6:24325--24339.

\bibitem{teranishi2019a}
Teranishi~K, Kogiso~K. Control-theoretic approach to malleability cancellation
  by attacked signal normalization. IFAC-PapersOnLine.
  2019;\hspace{0pt}52(20):297--302.

\bibitem{cheon2020}
Cheon~JH, Kim~D, Kim~J, et~al. Authenticated computation of control signal from
  dynamic controllers. In: {IEEE} {Conference} on {Decision} and {Control};
  2020. p. 3249--3254.

\bibitem{fauser2020}
Fauser~M, Zhang~P. Resilience of cyber-physical systems to covert attacks by
  exploiting an improved encryption scheme. In: {IEEE} {Conference} on
  {Decision} and {Control}; 2020. p. 5489--5494.

\bibitem{Bernstein93}
Bernstein~DJ, Lenstra~AK. A general number field sieve implementation. In:
  Lenstra~AK, Lenstra~HW, editors. The development of the number field sieve.
  Berlin, Heidelberg: Springer Berlin Heidelberg; 1993. p. 103--126.

\end{thebibliography}

\end{document}